\newif\ifcomment
\newcommand{\redcomment}[1]{\ifcomment\color{red} #1 \color{black}\fi}
\newif\iffigabbrv
\newcommand{\figtext}{\iffigabbrv Fig.\else Figure\fi}
\newcommand{\bigo}[1]{\ensuremath{\mathcal{O}(#1)}}
\newcommand{\alg}{\ensuremath{\mathcal{A}}}
\newcommand{\msg}[2]{\ensuremath{\texttt{#1}(#2)}}
\newcommand{\countalg}{\textsc{Countdown}}
\newcommand{\Current}{\texttt{Current}}
\newcommand{\Maximum}{\texttt{Maximum}}
\algrenewcommand\ALG@beginalgorithmic{\small}
\algrenewcommand\alglinenumber[1]{\footnotesize #1:}
\newcommand{\multiline}[1]{%
  \begin{tabularx}{\dimexpr\linewidth-\ALG@thistlm}[t]{@{}X@{}}
    #1
  \end{tabularx}
}
\title{Memory Lower Bounds and Impossibility Results for Anonymous Dynamic Broadcast}
\titlerunning{Memory Lower Bounds and Impossibility for Anonymous Dynamic Broadcast}
\author{Garrett Parzych}{School of Computing and Augmented Intelligence \and Biodesign Center for Biocomputing, Security and Society\\Arizona State University, Tempe, AZ, USA}{gparzych@asu.edu}{}{}
\author{Joshua J. Daymude}{School of Computing and Augmented Intelligence \and Biodesign Center for Biocomputing, Security and Society\\Arizona State University, Tempe, AZ, USA}{jdaymude@asu.edu}{https://orcid.org/0000-0001-7294-5626}{}
\authorrunning{G.\ Parzych and J.\ J.\ Daymude}
\keywords{Dynamic networks, anonymity, broadcast, space complexity, lower bounds, termination detection, stabilizing termination}
\begin{document}

\maketitle

\begin{abstract}
    \textit{Broadcast} is a ubiquitous distributed computing problem that underpins many other system tasks.
    In static, connected networks, it was recently shown that broadcast is solvable without any node memory and only constant-size messages in worst-case asymptotically optimal time (Hussak and Trehan, PODC'19/STACS'20/DC'23).
    In the dynamic setting of adversarial topology changes, however, existing algorithms rely on identifiers, port labels, or polynomial memory to solve broadcast and compute functions over node inputs.
    We investigate \textit{space-efficient, terminating broadcast algorithms for anonymous, synchronous, 1-interval connected dynamic networks} and introduce the first memory lower bounds in this setting.
    Specifically, we prove that broadcast with \textit{termination detection} is impossible for idle-start algorithms (where only the broadcaster can initially send messages) and otherwise requires $\Omega(\log n)$ memory per node, where $n$ is the number of nodes in the network.
    Even if the termination condition is relaxed to \textit{stabilizing termination} (eventually no additional messages are sent), we show that any idle-start algorithm must use $\omega(1)$ memory per node, separating the static and dynamic settings for anonymous broadcast.
    This lower bound is not far from optimal, as we present an algorithm that solves broadcast with stabilizing termination using $\bigo{\log n}$ memory per node in worst-case asymptotically optimal time.
    In sum, these results reveal the necessity of non-constant memory for nontrivial terminating computation in anonymous dynamic networks.
\end{abstract}

\section{Introduction} \label{sec:intro}

Distributed algorithms for \textit{dynamic networks} enable processes to coordinate even as the communication links between them change over time, often rapidly and adversarially~\cite{Casteigts2012-timevaryinggraphs, Altisen2023-selfstabilizingsystems}.
When once a process disconnecting from a distributed system was viewed as a rare crash fault to tolerate, research over the last two decades has come to view dynamics as natural or even necessary to a system's function.
Application domains such as self-stabilizing overlay networks~\cite{Andersen2001-resilientoverlay, Feldmann2020-surveyalgorithms}, blockchains~\cite{Bessani2020-byzantinereplication, Gramoli2020-blockchainconsensus}, and swarm robotics~\cite{Hamann2018-swarmrobotics, Flocchini2019-distributedcomputing} are defined by their rapidly changing or physically moving components, forcing algorithms to achieve their goals by leveraging---or more often operating in spite of---these dynamics.

Despite the challenges, many fundamental problems have been addressed under adversarial dynamics, including broadcast, consensus, and leader election (see~\cite{Augustine2016-distributedalgorithmic, Casteigts2018-journeydynamic} for complementary surveys).
However, many of these algorithms endow their nodes with unique identifiers, port labels for locally distinguishing among their neighbors, (approximate) knowledge of the number of nodes in the network, or superlinear memories.
Taking inspiration from collective behavior in biological complex systems of computationally weak entities---such as foraging in ant colonies~\cite{Oh2023-adaptivecollective, Chandrasekhar2018-distributedalgorithm}, aggregation in slime mold spore migrations~\cite{Tarnita2015-fitnesstradeoffs}, and energy distribution in microbiomes~\cite{Prindle2015-ionchannels, Liu2015-metaboliccodependence}---we question to what extent these additional capabilities are necessary.
Specifically, we consider dynamic networks of \textit{anonymous} nodes (lacking both unique identifiers and port labels) with limited memory.
As an aside, anonymity is a desirable feature in its own right for engineering privacy-sensitive applications, such as Bluetooth-based contact tracing~\cite{Reichert2021-surveyautomatic}.

We consider the fundamental problem of synchronous \textit{broadcast}, in which all nodes in a network must eventually be informed of some information originating at a single node.
In static, connected networks, broadcast can be solved without any persistent memory at all: the \textit{amnesiac flooding} algorithm~\cite{Hussak2019-terminationflooding, Hussak2020-terminationflooding, Hussak2023-terminationamnesiac}---in which nodes forward copies of any message they receive to any neighbor that did not send them the message in the last round---informs all nodes (correctness) and allows them to eventually stop sending additional messages (stabilizing termination) within worst-case asymptotically optimal time.
But even when adversarial dynamics are constrained to maintain network connectivity in every round, a longstanding conjecture states that dynamic broadcast with stabilizing termination is impossible in $\bigo{\log n}$ memory without node identifiers or knowledge of $n$~\cite{ODell2005-informationdissemination}, let alone without any memory at all.
We summarize our contributions as follows.

\subparagraph{Our Contributions}

All results are proven with respect to deterministic algorithms run by anonymous nodes (lacking identifiers and port labels) in a dynamic network whose topology can change arbitrarily but remains connected in each synchronous round.
In this setting:
\begin{itemize}
    \item Broadcast with \textit{termination detection}---i.e., the broadcaster must eventually decide broadcast is complete---is impossible for idle-start algorithms where only the broadcaster can initially send messages (Section~\ref{sec:detection-impossibility}) and otherwise requires $\Omega(\log n)$ space (Section~\ref{sec:detection-bound}).

    \item Any idle-start algorithm solving broadcast with \textit{stabilizing termination}---i.e., eventually no additional messages are sent---must have $\omega(1)$ space complexity (Section~\ref{sec:stabilizing-bound}).
    We then present an algorithm solving broadcast with stabilizing termination in $\bigo{\log n}$ space and worst-case asymptotically optimal time (Section~\ref{sec:stabilizing-algo}).

\end{itemize}

We note that although synchronous systems are typically seen as less general than asynchronous ones, it is actually the opposite for the purposes of lower bounds.
Since an asynchronous adversary can always simulate a synchronous one, any impossibility results or lower bounds proven w.r.t.\ synchrony will apply to both types of systems.

\subsection{Model} \label{subsec:model}

\subparagraph{Dynamic Networks}

We consider a synchronous dynamic network comprising a fixed set of nodes $V$.
Nodes communicate with each other via message passing over a communication graph whose topology changes over time.
We model this topology as a \textit{time-varying graph} $\mathcal{G} = (V, E, T, \rho)$ where $V$ is the set of nodes, $E$ is the static set of undirected edges that may appear in the graph, $T = \mathbb{N}$ is the lifetime of the graph, and $\rho : E \times T \to \{0,1\}$ is the presence function indicating whether an edge exists at a given time~\cite{Casteigts2012-timevaryinggraphs}.
We refer to the set of edges present at time $t \in T$ as $E_t = \{e \in E: \rho(e, t) = 1\}$ and the undirected graph $G_t = (V, E_t)$ as the \textit{snapshot} of $\mathcal{G}$ at time $t \in T$.
We assume an adversary controls the presence function $\rho$ and that $E$ is the complete set of edges on $V$; i.e., we do not limit which edges the adversary can introduce.
We do, however, follow the majority of dynamic broadcast literature (e.g.,~\cite{Casteigts2012-timevaryinggraphs, ODell2005-informationdissemination, Kuhn2010-distributedcomputation, Kowalski2020-polynomialcounting, DiLuna2022-computinganonymous}) in assuming \textit{1-interval-connectivity} (also called ``always-connected snapshots''); i.e., the adversary may make arbitrary topological changes at each time $t \in T$ so long as each snapshot $G_t$ is connected.

\subparagraph{Node Capabilities}

Motivated by computationally weak individuals in biological collectives (e.g., cells, microbes, social insects, etc.), we consider nodes that are \textit{anonymous}, lacking unique identifiers, and have \textit{no knowledge or approximation of any global measure}, including the number of nodes $n$.
We further assume that nodes have \textit{no port labels}; i.e., they cannot count or locally distinguish among their neighbors.
Consequently, when a node communicates with its neighbors via \textit{message passing}, it does so using a broadcast mechanism, sending the same message to all its current neighbors.

\subparagraph{Algorithms and Execution}

Each node in the time-varying graph $\mathcal{G}$ synchronously executes the same distributed algorithm $\mathcal{A}$.
All nodes are initialized at time $t = 0$, and each synchronous round $t$ starting at time $t$ proceeds as follows:

\begin{enumerate}
    \item The adversary fixes the network topology $G_t$ for round $t$.

    \item Each node may send a message to its neighbors in $G_t$ according to algorithm $\alg$ as a function of its current state.

    \item Each node may perform a state transition according to algorithm $\alg$ as a function of its current state and the multiset of messages it (reliably) receives from its neighbors in $G_t$.
\end{enumerate}

\subparagraph{Memory} 

In this paper, we are primarily concerned with an algorithm's \textit{space complexity}, the maximum number of bits a node uses to store its state between rounds.
As usual for distributed systems, we are interested in the asymptotic growth of this measure as a function of $n = |V|$, the number of nodes.
We emphasize that even if nodes have $\Omega(\log n)$ memory---sufficient for storing unique identifiers---they are anonymous and are not assigned such identifiers a priori.
We also note that we do not analyze \textit{message complexity} directly, as our execution model specifies nodes that send messages based only on their states; thus, there are at most as many message types as states.

\subparagraph{Broadcast}

In the \textit{broadcast} problem, every node starts in the same state except for a single node known as the \textit{broadcaster} that is trying to deliver some information to every other node in the network.
We say that a node is \textit{informed} if it is the broadcaster or has previously received a message from an informed node.
All other nodes are \textit{uninformed}.
A broadcast is \textit{complete} when every node in the network is informed.
A distributed algorithm $\mathcal{A}$ \textit{solves} the broadcast problem in $t$ rounds if, for any time-varying graph whose nodes all execute $\mathcal{A}$, broadcast is completed by the end of round $t$.

\subparagraph{Idle-Start}

A node is \textit{idle} if it will not send a message in the subsequent round, and will not change its state if it does not receive a message (this has also been called ``quiescent''~\cite{Lynch1996-distributedalgorithms}, though that term is overloaded in this context).
Some broadcast algorithms critically rely on initializing all nodes as non-idle at time $t = 0$; conversely, an \textit{idle-start} algorithm initializes all nodes except the broadcaster as idle.

\subparagraph{Termination}

A simple solution to broadcast is to make every informed node continuously send messages.
Since we assume 1-interval connectivity, there is always at least one uninformed node receiving a message from an informed node in each round, so this algorithm solves broadcast in $\bigo{n}$ rounds.
This runtime bound is worst-case asymptotically optimal, but a smoothed analysis reveals significant improvements on more ``typical'' topologies~\cite{Dinitz2018-smoothedanalysis, Dinitz2022-smoothedanalysis}.
This algorithm also achieves $\Theta(1)$ space complexity, since nodes need only remember whether they're informed.
However, nodes sending messages forever creates undesirable congestion and precludes the system from advancing to further tasks, e.g., starting a new broadcast or using the broadcast information as part of a larger algorithm.
Thus, we seek algorithms meeting some kind of termination conditions.
An algorithm achieves \textit{stabilizing termination} if every node becomes idle within finite time.
An algorithm achieves the stronger condition of \textit{termination detection} if the broadcaster correctly and irrevocably decides that broadcast is complete (i.e., by entering a terminating state) within finite time.

\subsection{Related Work} \label{subsec:relwork}

Broadcast is a ubiquitous and well-studied distributed computing problem, often appearing as a building block in more complex tasks.
In static, connected networks, broadcast with stabilizing termination is solvable without any node memory (and thus without identifiers) and only $\Theta(1)$ message complexity in worst-case asymptotically optimal time, though port labels are required to distinguish among neighbors~\cite{Hussak2019-terminationflooding, Hussak2020-terminationflooding}.
In a recent extension of this work, the same algorithm was proven correct under node and edge deletion dynamics~\cite{Hussak2023-terminationamnesiac}, but breaks down under more general adversarial dynamics.
With this inspiration, our focus is space-efficient, terminating algorithms for broadcast in anonymous dynamic networks.

Early works on dynamic broadcast typically assumed stronger node capabilities.
A series of works on shortest, fastest, and foremost broadcast assumed local identifiers enabling a node $u$ to maintain a consistent label for any neighbor $v$, even if $v$ disconnected from and later reconnected to $u$~\cite{Casteigts2010-deterministiccomputations, Casteigts2012-timevaryinggraphs, Casteigts2015-shortestfastest}.
This assumption enables the construction of time-invariant logical structures like spanning trees, which---when combined with the assumption of recurrent dynamics (edges will eventually reappear)---reduces dynamic broadcast to static routing on these structures.
Similar techniques are used when assuming both unique node identifiers and shared knowledge of $n$, the number of nodes in the network~\cite{Raynal2014-simplebroadcast}.
Among these early works, only O'Dell and Wattenhofer~\cite{ODell2005-informationdissemination} share our focus on anonymous nodes and space complexity.
They conjectured that no algorithm can solve broadcast with stabilizing termination in $\bigo{\log n}$ space when nodes are anonymous and have no knowledge of $n$.
We build evidence against this conjecture with an algorithm achieving exactly this goal (Section~\ref{sec:stabilizing-algo}), though our synchronous execution model is not directly comparable to their combination of asynchronous time, bounded message latency, and disconnection detection for re-broadcasting messages.

A parallel line of work investigated what functions a dynamic network can deterministically compute over its nodes' inputs~\cite{Kuhn2010-distributedcomputation}.
In the context of anonymous dynamic networks, most results focus on the \textit{(exact) counting problem}~\cite{Michail2013-namingcounting, DiLuna2014-consciousunconscious, DiLuna2014-countinganonymous, DiLuna2016-nontrivial, Chakraborty2018-fasterexactcounting, Kowalski2020-polynomialcounting}, which terminating broadcast reduces to in $\bigo{n}$ time and $\bigo{\log n}$ space: once the broadcaster knows the number of nodes $n$, it need only wait $n$ rounds before every other node must have been informed (since the dynamic network is 1-interval connected), at which point it can terminate.
These works recently culminated in the exact characterization by Di Luna and Viglietta~\cite{DiLuna2022-computinganonymous, DiLuna2023-optimalcomputation} showing that anonymous dynamic networks with at least one leader can compute only the \textit{multi-aggregate functions}---those for which a node's output depends only on its own input and the multiset of all nodes' inputs---and do so in optimal (linear) time.
However, their algorithm uses $\Theta(n^3\log n)$ space in the worst case~\cite{DiLuna2023-briefannouncement}, leaving open what memory is necessary.
Our impossibility results and memory lower bounds for terminating broadcast (Sections~\ref{sec:detection-impossibility}--\ref{sec:stabilizing-bound}) shed light on this question, as many nontrivial multi-aggregate functions require information from at least one node to be communicated to all other nodes (e.g., minimums/maximums, averages, exact and generalized counting, etc.).

\section{Impossibility Results for Termination Detection}
\label{sec:detection-impossibility}

We begin by proving that there is no idle-start algorithm---i.e., one in which only the broadcaster can initially send messages---that solves broadcast with termination detection in our setting.
The idea behind the proof is as follows.
Supposing to the contrary that such an algorithm $\alg$ exists, it must solve broadcast and detect termination on any static, connected network $G$.
So we consider an execution of $\alg$ on an extension of $G$ as a time-varying graph $\mathcal{G}$ that we carefully construct to achieve two goals: (1) the execution of $\alg$ on $G$ in the time-varying graph $\mathcal{G}$ is identical to its execution on $G$ alone, and (2) there is an idle node in $\mathcal{G}$ that is sequestered from ever being informed.
This drives a contradiction: the broadcaster must detect termination in $\mathcal{G}$ because it does so on $G$ alone, but will do so incorrectly because there is still an uninformed node.

\begin{figure}[t]
    \centering
    \begin{subfigure}{0.3\textwidth}
        \centering
        \includegraphics[width=\textwidth]{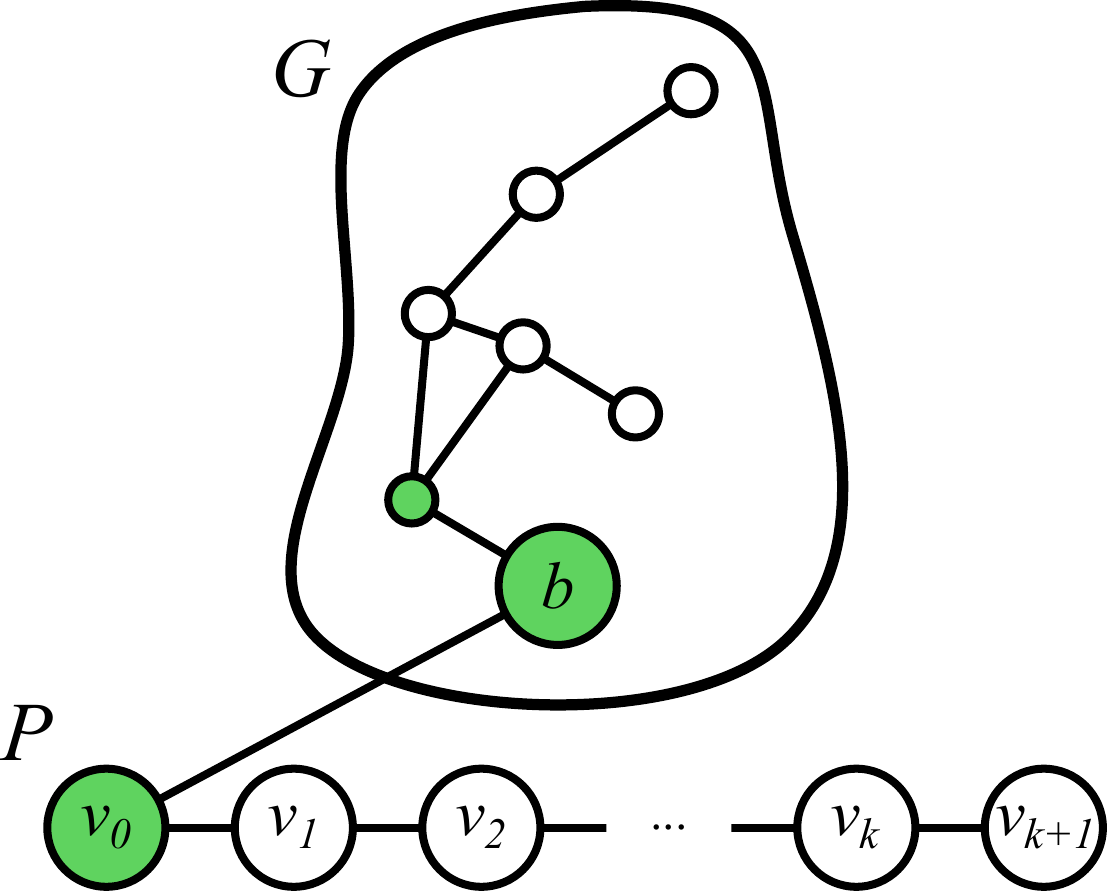}
        \caption{\centering End of Round 0}
        \label{fig:detect-impossible:0}
    \end{subfigure}
    \hfill
    \begin{subfigure}{0.3\textwidth}
        \centering
        \includegraphics[width=\textwidth]{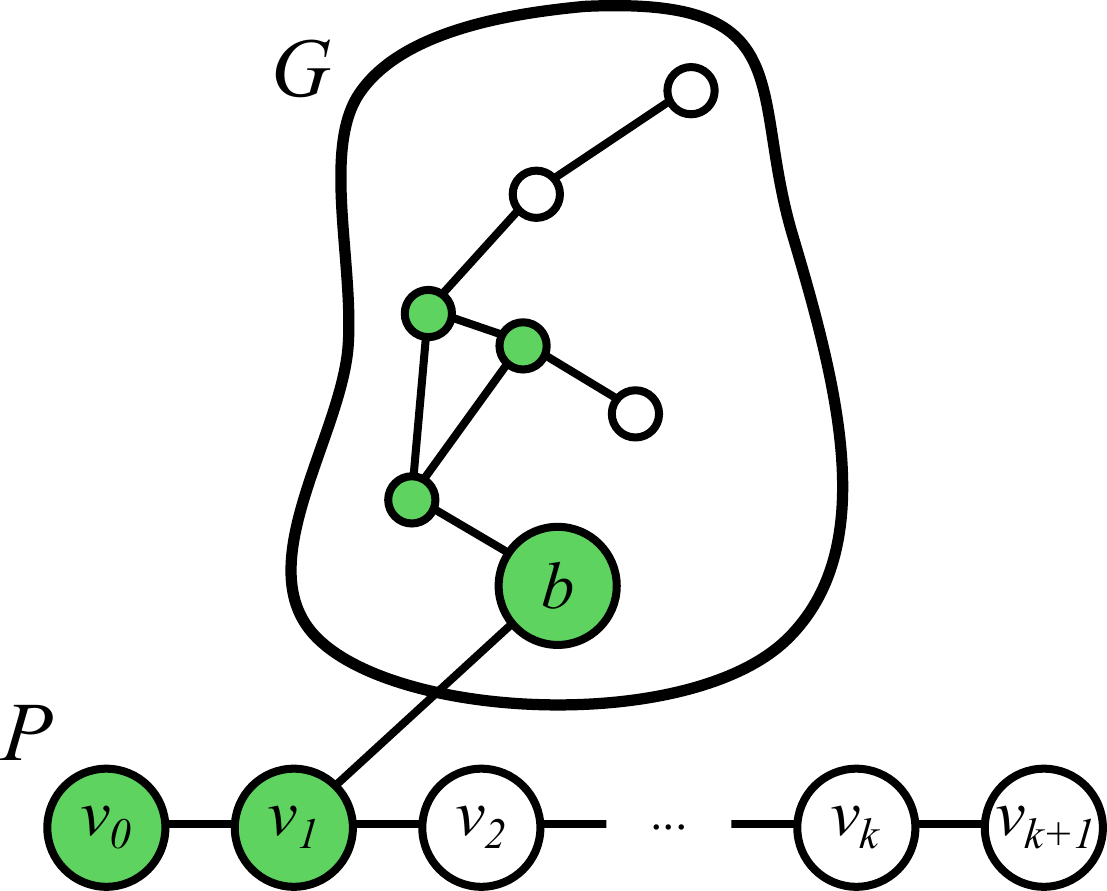}
        \caption{\centering End of Round 1}
        \label{fig:detect-impossible:1}
    \end{subfigure}
    \hfill
    \begin{subfigure}{0.3\textwidth}
        \centering
        \includegraphics[width=\textwidth]{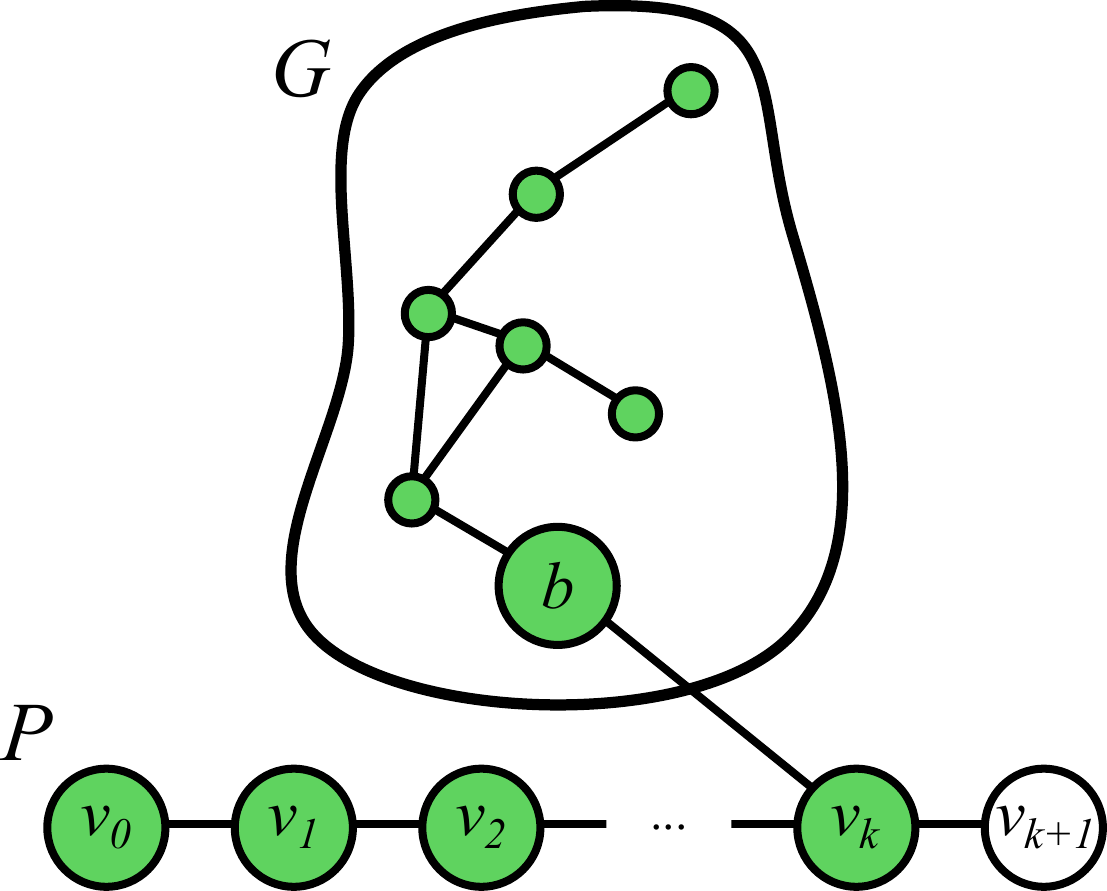}
        \caption{\centering End of Round $k$}
        \label{fig:detect-impossible:k}
    \end{subfigure}
    \caption{The time-varying graph $\mathcal{G}$ used in the proof of Theorem~\ref{thm:detection-impossibility}.
    In each round $t$, the broadcaster $b$ is connected to node $v_t$ in the path $P$.
    Informed nodes are shown in green.
    When $b$ declares broadcast to be complete in round $k$, node $v_{k+1}$ is still uninformed.}
    \label{fig:detect-impossible}
\end{figure}

\begin{theorem} \label{thm:detection-impossibility}
    No deterministic idle-start algorithm can solve broadcast with termination detection for anonymous, synchronous, 1-interval connected dynamic networks.
\end{theorem}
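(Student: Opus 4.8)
The plan is to argue by contradiction along the lines sketched above. Suppose $\alg$ is a deterministic idle-start algorithm solving broadcast with termination detection in our setting. A one-node network (its sole node being the broadcaster, with no edges ever present) is a valid $1$-interval connected dynamic network, so $\alg$ must solve broadcast on it and its broadcaster $b$ must eventually declare broadcast complete. In that isolated execution $b$ never receives a message, so its sequence of states $s_0, s_1, s_2, \dots$ is determined entirely by $\alg$'s transition function applied to the empty message multiset; let $k$ be a round by the end of which $b$ has irrevocably decided that broadcast is complete.

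I would then construct the time-varying graph $\mathcal{G}$ of Figure~\ref{fig:detect-impossible}. Its node set is $\{b\} \cup \{v_0, v_1, \dots, v_{k+1}\}$, where the $v_i$ form a static path $P = v_0 - v_1 - \cdots - v_{k+1}$, $b$ is the broadcaster, and every $v_i$ starts idle. For each round $t$ with $0 \le t \le k$ the adversary retains all edges of $P$ and adds exactly the edge $\{b, v_t\}$; rounds $t > k$ are irrelevant (take, say, the round-$k$ snapshot). Every snapshot is a path together with one pendant vertex and is therefore connected, so $\mathcal{G}$ is $1$-interval connected, and the construction uses neither identifiers nor port labels.

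The crux is the following invariant, which I would prove by induction on $t \in \{0, \dots, k\}$: at the start of round $t$ in the execution of $\alg$ on $\mathcal{G}$, (a) $b$ is in state $s_t$, and (b) every node of $\{v_t, v_{t+1}, \dots, v_{k+1}\}$ is still in its initial idle state. The base case $t = 0$ is immediate. For the inductive step, observe that in round $t$ the nodes $v_t, \dots, v_{k+1}$ are idle and so send nothing \emph{in that round}; any awake node $v_j$ with $j < t$ can transmit only along $P$, so such a message reaches at most $v_t$; and $b$'s only neighbor is $v_t$. Hence none of $v_{t+1}, \dots, v_{k+1}$ receives a message in round $t$, so they remain idle, and $b$ receives the empty multiset, so it moves from $s_t$ to $s_{t+1}$ exactly as in the isolated execution. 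Instantiating the invariant at $t = k$, the broadcaster proceeds through the same states and therefore declares broadcast complete by the end of round $k$; but $v_{k+1}$ was idle at the start of round $k$ and its sole neighbor $v_k$ was also idle then (and so sent nothing in round $k$), so $v_{k+1}$ has never received a message and is uninformed. Thus broadcast is incomplete exactly when $b$ says it is complete --- the desired contradiction, establishing Theorem~\ref{thm:detection-impossibility}.

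I expect the main obstacle to be setting up this invariant so that both of its purposes hold at once: (i) $b$'s ``contact vertex'' on the path is always still asleep, which forces $b$'s local execution to coincide with the isolated execution and hence forces it to (wrongly) terminate, and (ii) the informed region never overtakes $b$ along the path, permanently sequestering $v_{k+1}$. The subtle point is the intra-round ordering of sending and receiving: $v_t$ is idle \emph{at the instant it would send} in round $t$, waking only after it receives $b$'s message later in the same round, which is exactly why $b$ hears nothing. The remaining ingredients --- connectivity of every snapshot, legitimacy of the one-node instance, and finiteness of $k$ --- are routine.
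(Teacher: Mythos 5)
Your proposal is correct and takes essentially the same route as the paper's proof: the same rotating edge $\{b, v_t\}$ attaching the broadcaster to a path of $k+2$ initially idle nodes, the same induction showing that the tail of the path stays idle (so $b$ replays its message-free execution and the informed region never overtakes the contact vertex), and the same contradiction when $b$ declares completion in round $k$ while $v_{k+1}$ is uninformed. The only difference is that you instantiate the static base instance as the one-node network rather than an arbitrary static connected graph $G$; this is a harmless specialization, since the model places no restriction excluding $n=1$ and termination detection still forces $b$ to decide in finite time there.
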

\begin{proof}
    Suppose to the contrary that there exists an idle-start algorithm $\alg$ solving broadcast with termination detection in our setting.
    Let $G$ be any static, connected graph and let $b$ be any node in $G$.
    If $\alg$ is executed on $G$ with $b$ as the broadcaster, there must exist some round $k$ during which $b$ correctly and irrevocably declares broadcast to be complete.

    Construct a time-varying graph $\mathcal{G}$ as follows (see \figtext~\ref{fig:detect-impossible}).
    First, $\mathcal{G}$ contains the static graph $G$ as a fixed part; i.e., all edges of $G$ will remain present throughout the lifetime of $\mathcal{G}$.
    Additionally, $\mathcal{G}$ contains a fixed path $P = v_0v_1 \cdots v_{k+1}$ of $k + 2$ nodes.
    Finally, in each round $t \in \{0, \ldots, k\}$, there is a single edge $\{b, v_t\}$ connecting the broadcaster $b$ to the path $P$.

    Consider the execution of $\alg$ on $\mathcal{G}$ with $b$ as the broadcaster.
    We argue by induction on $t \in \{0, \ldots, k\}$ that in round $t$, (1) all nodes in $G$ send the same messages and perform the same state transitions as they did in the execution of $\alg$ on $G$, and (2) the nodes $\{v_{t+1}, \ldots, v_{k+1}\}$ remain idle and uninformed.
    In round $t = 0$, only the broadcaster $b$ potentially sends messages and changes state since $\alg$ is an idle-start algorithm.
    Recall that nodes have no port labels and no knowledge of $n$.
    Thus, $b$ must send the same messages and perform the same state transition as in the execution of $\alg$ on $G$.
    The only neighbor of $b$ in the path $P$ is $v_0$, so nodes $\{v_1, \ldots, v_{k+1}\}$ receive no messages and remain idle and uninformed.

    Now suppose the claim holds up to and including some round $0 \leq t < k$.
    By the induction hypothesis, all nodes in $G$ have the same states at the start of round $t+1$ as in the execution of $\alg$ on $G$, and thus send the same messages.
    However, since $\{b, v_{t+1}\}$ is the only edge between $G$ and $P$ in round $t+1$, $b$ could in principle make a divergent state transition if it receives a message from $v_{t+1}$.
    But by the induction hypothesis, $v_{t+1}$ is idle and thus sends no messages in round $t+1$.
    For this same reason, nodes $\{v_{t+2}, \ldots, v_{k+1}\}$ receive no messages and remain idle and uninformed in round $t+1$.

    Therefore, in round $k$, the broadcaster $b$ irrevocably declares broadcast to be complete just as it did in the static setting, but $v_{k+1}$ remains uninformed, a contradiction.
\end{proof}

A related line of research investigates the computability of functions in dynamic networks with no or multiple leaders.
Di Luna and Viglietta have recently shown that if the number of leaders is known, nodes can compute the same functions as systems with a single leader. 
However, they found that if the number of leaders is unknown, then it is impossible for nodes to compute the size of the network with termination detection~\cite{DiLuna2023-optimalcomputation}.
Using similar ideas to our previous proof, we can show that broadcast with termination detection is also impossible without knowing the number of broadcasters.
Since broadcast can be reduced to counting the size of the network (simply broadcast for $n$ rounds after receiving the count), our theorem implies the result of Di Luna and Viglietta.
However, since we are unaware of any reduction from counting to broadcast, ours seems to be slightly more general.

\begin{theorem} \label{thm:detect-multileader}
    Even without an idle start, no deterministic algorithm can solve broadcast with termination detection for anonymous, synchronous, 1-interval connected dynamic networks if nodes have no knowledge of the number of broadcasters.
\end{theorem}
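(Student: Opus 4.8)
The plan is to mirror the structure of the proof of Theorem~\ref{thm:detection-impossibility}, replacing its reliance on the idle-start hypothesis with the assumption that nodes cannot count the broadcasters. Suppose, for contradiction, that a deterministic algorithm $\alg$ solves broadcast with termination detection for networks with any (unknown) number of broadcasters. Since a static connected graph with a single broadcaster is a special case, running $\alg$ on such a graph $G$ with broadcaster $b$ yields a round $k$ at which $b$ irrevocably declares broadcast complete.

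Next I would construct a time-varying graph $\mathcal{G}$ with two broadcasters on which $\alg$ declares incorrectly. The template is the same as before---a ``world'' in which one broadcaster behaves exactly as in the baseline execution, together with a buffer that keeps some node uninformed past round $k$---but the buffer can no longer consist of idle (silent) nodes. Instead, I would take two symmetric copies $G^1,G^2$ of $G$, with broadcasters $b^1$ and $b^2$, joined by a buffer path $P=v_0v_1\cdots v_{2k+2}$ of initially uninformed nodes; at each round $t\le k$ I would add the single mobile edges $\{b^1,v_t\}$ and $\{b^2,v_{2k+2-t}\}$ (mirror images of one another), keeping all edges of $G^1$, $G^2$, and $P$ present throughout. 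I would then prove, by induction on $t\le k$: (i) every snapshot of $\mathcal{G}$ is connected, since the buffer path bridges the two copies; (ii) by the automorphism $b^1\!\leftrightarrow\!b^2$, $v_i\!\leftrightarrow\!v_{2k+2-i}$, the two copies of $G$ and the two halves of $P$ evolve in lockstep, so the execution restricted to $G^1$ (together with the messages $b^1$ receives over its mobile edge) coincides with an execution of $\alg$ on a single-broadcaster network in which broadcast is complete by round $k$---and hence $b^1$ declares broadcast complete in round $k$; and (iii) the informed frontiers advancing along $P$ from the $b^1$-side and the $b^2$-side reach the central vertex $v_{k+1}$ no earlier than round $k+1$, so $v_{k+1}$ is still uninformed in round $k$. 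Points (ii) and (iii) together contradict the correctness of termination detection.

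The step I expect to be hardest is (ii). The buffer nodes are active, so $b^1$ genuinely receives messages along its mobile edge, and I must argue that these are exactly the messages $b^1$ would receive in some legitimate single-broadcaster execution of $\alg$ in which broadcast does complete. In the idle-start proof this was immediate (the buffer sent nothing); here it has to be manufactured, and I anticipate that doing so forces the baseline graph $G$ and the mobile-edge schedule to be chosen so that, from $b^1$'s vantage point, the symmetric presence of two broadcasters is indistinguishable from a single broadcaster on a larger, not-yet-fully-informed network. This is essentially the indistinguishability exploited by Di Luna and Viglietta to rule out counting with termination detection under an unknown number of leaders, and it is precisely what would fail if the number of broadcasters were known.
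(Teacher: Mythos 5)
There is a genuine gap at exactly the step you flag as hardest, (ii), and it is not a detail that can be patched within your construction as stated. Your round $k$ is defined from the baseline execution of $\alg$ on the static graph $G$ alone, where $b$ receives messages only from its $G$-neighbors. In your time-varying graph, the buffer nodes are active (this is the whole point of dropping the idle-start hypothesis), so from round $0$ onward $b^1$ receives extra messages over its mobile edge $\{b^1,v_t\}$, and nothing prevents its state from diverging from the baseline immediately; the mirror symmetry between the two halves only shows that $b^1$ and $b^2$ evolve identically to \emph{each other}, not that either of them follows any single-broadcaster execution in which broadcast completes by round $k$. Consequently there is no argument that $b^1$ ever declares termination while $v_{k+1}$ is uninformed---it may simply declare later (after the path is informed) or behave in some other way---and the contradiction never materializes. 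Note also that the indistinguishability you would need is from a \emph{completed} single-broadcaster execution (to force a premature declaration); indistinguishability from ``a larger, not-yet-fully-informed network,'' as you write, would if anything justify $b^1$ \emph{not} declaring, so the logic is inverted at the crucial point. Finally, your construction uses only two broadcasters, whereas the hypothesis being exploited is that the \emph{number} of broadcasters is unknown; a proof that only ever uses two broadcasters would have to explain why knowledge of ``at most two'' does not already suffice to defeat it.

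The paper avoids this trap by never letting the surviving broadcaster see anything unexpected. It runs $\alg$ on $K_3$ with one broadcaster to get configurations $C_0,\dots,C_x$ (termination at $C_x$), then builds a dynamic graph containing $2^x$ copies of $C_0$ plus a path $p_0\cdots p_x$. In each round, one non-broadcaster per clean copy is sacrificed to touch the path; after the round, anonymity plus adversarial rewiring let the remaining unexposed nodes (which, by symmetry, are in the states $\beta_{i}$, $\alpha_{i}$ of the $K_3$ run) be regrouped into $2^{x-i}$ fresh copies of $C_i$, while all contaminated nodes are dumped onto the path. The broadcaster of the one copy surviving to round $x$ has a view identical to the $K_3$ execution and declares termination while $p_x$ is uninformed. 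The exponential redundancy in the number of broadcasters is what replaces your unavailable ``silent buffer,'' and it is precisely where the unknown number of broadcasters enters. If you want to salvage your two-copy idea, you would need to construct, explicitly, a legitimate single-broadcaster network and schedule reproducing $b^1$'s exact sequence of received message multisets and show that broadcast completes there by round $k$; as it stands, that object is assumed rather than built.
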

\begin{proof}
    Suppose for contradiction that an algorithm $\alg$ solves broadcast with termination detection without giving nodes knowledge of the number of broadcasters.
    Let the \textit{configuration} of a dynamic network at time $t$ be the multiset of node states at the start of round $t$.
    Let $C_0, C_1, \dots, C_x$ be the sequence of configurations that occur from running $\alg$ on the complete graph $K_3$ with a single broadcaster, where $C_x$ is the first configuration in which the broadcaster declares termination.
    We will create a new time-varying graph $\mathcal{G}$ on which $\alg$ will incorrectly terminate.
    First create $2^x$ copies of $C_0$, with each copy having a single broadcaster and two non-broadcasters.
    Add a path $p_0p_1 \dots p_x$ of $x+1$ nodes and for each copy of $C_0$, choose one of the non-broadcaster nodes and attach it to $p_0$; this will be the first snapshot of $\mathcal{G}$.

    Suppose that in $C_1$, the broadcaster was in state $\beta_1$ and the non-broadcasters in state $\alpha_1$.
    By symmetry of the graphs, after the first round of executing $\mathcal{A}$ on $\mathcal{G}$, each copy of $C_0$ will have a node in state $\beta_1$ and a node in state $\alpha_1$, while the node attached to $p_0$ will be in some other unknown state.
    Thus there are $2^x$ nodes in state $\beta_1$ and $2^x$ nodes in state $\alpha_1$.
    Use all the nodes in state $\alpha_1$ and half the nodes in state $\beta_1$ to create $2^{x-1}$ copies of the configuration $C_1$.
    Attach all of the unused nodes from the copies of $C_0$ to $p_1$.
    Again, choose a single non-broadcaster node from each copy of $C_1$, and attach them to $p_1$.
    Then run $\alg$ for one additional round.
    If $\beta_2$ and $\alpha_2$ are the states of the broadcaster and non-broadcasters in $C_2$ respectively, then by symmetry, our graph after this round will have $2^{x-1}$ copies each of $\beta_2$ and $\alpha_2$.
    We can now use these nodes to create $2^{x-2}$ copies of $C_2$.

    Continuing on in this way, we can create $2^{x-i}$ copies of configuration $C_i$ for each $i \in \{0, \ldots, x\}$ while only informing a single node in the path at a time.
    Thus, after $x$ rounds, we will have $2^{x - x} = 1$ copy of $C_x$ and $p_x$ will still be uninformed.
    But the broadcaster in the copy of $C_x$ will have declared termination, contradicting the correctness of $\mathcal{A}$.
\end{proof}

\section{Memory Lower Bound for Termination Detection} \label{sec:detection-bound}

In an idle-start algorithm, nodes have no indication of whether they have idle, uninformed neighbors.
This drives the indistinguishability result at the center of Theorem~\ref{thm:detection-impossibility}: a static, connected graph cannot tell if it's the entire network or a subgraph in a larger whole.
Without the constraints of an idle-start, however, this particular contradiction---and its corresponding impossibility result---disappears.
For example, the history tree algorithm of Di Luna and Viglietta solves broadcast in this setting in linear time and $\Theta(n^3\log n)$ space~\cite{DiLuna2023-briefannouncement}.
Still, all such algorithms must use at least logarithmic memory, as we now show.

\begin{figure}[t]
    \centering
    \includegraphics[width=\textwidth]{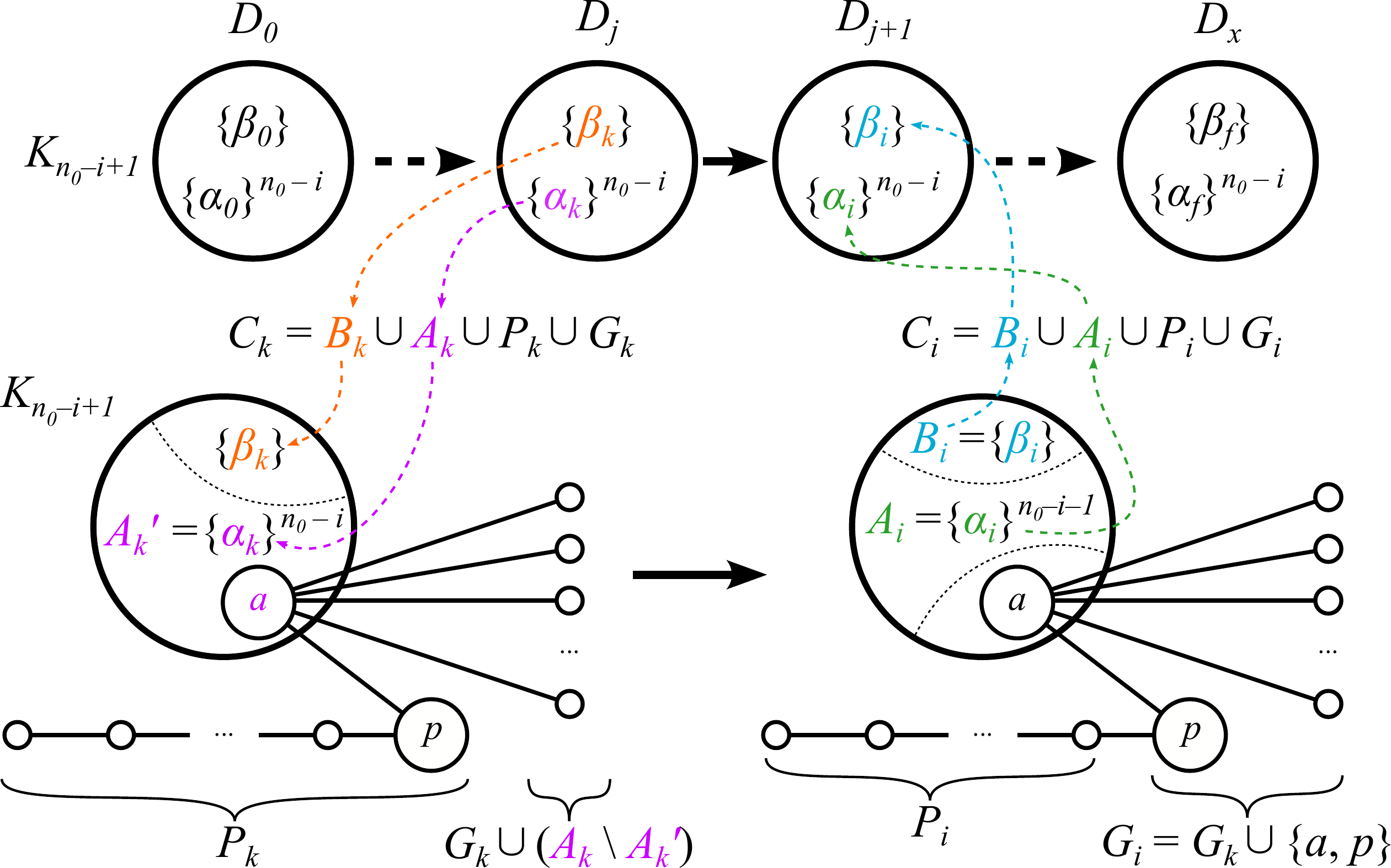}
    \caption{The inductive construction of configuration $C_i$ as described in the proof of Theorem~\ref{thm:detection-bound}.
    The key idea is to find a new pair of states $(\beta_i, \alpha_i)$ that is not already in $(\beta_0, \alpha_0), \ldots, (\beta_{i-1}, \alpha_{i-1})$ by arranging a previously identified reachable configuration $C_k$ and extending the corresponding execution of $\alg$ by one additional round.}
    \label{fig:detect-bound}
\end{figure}

\begin{theorem} \label{thm:detection-bound}
    Any algorithm that solves broadcast with termination detection for anonymous, synchronous, 1-interval connected dynamic networks must have $\Omega(\log n)$ space complexity.
\end{theorem}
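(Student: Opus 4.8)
The plan is to fool the broadcaster into irrevocably declaring the broadcast complete while some node is still uninformed---the same high-level strategy as in Theorem~\ref{thm:detection-impossibility}, but now the algorithm need not be idle-start, so uninformed nodes are no longer invisible and cannot simply be parked on a sequestered path. Suppose $\alg$ solves broadcast with termination detection using at most $s$ distinct states in the execution we build. First, run $\alg$ on a static complete graph: by symmetry all non-broadcasters stay in a common state, so the pair (broadcaster state, non-broadcaster state) evolves deterministically and is therefore eventually periodic within $s^2$ rounds; since the broadcaster must eventually, hence correctly, terminate there, it does so by some round $x \le s^2$, along the way receiving a definite sequence of $x+1$ message multisets. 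The goal is to replay that exact sequence to the broadcaster inside a 1-interval connected dynamic network that also contains a node still uninformed at round $x$: having seen an identical view through round $x$, the broadcaster runs $\alg$ to the same terminating decision at round $x$, yet broadcast is incomplete, contradicting correctness.

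The difficulty is connectivity: the uninformed node must share an edge with the rest of the network in every round, and a careless attachment perturbs the broadcaster's carefully reconstructed neighborhood. The fix is the inductive construction depicted in Figure~\ref{fig:detect-bound}: instead of one fixed gadget, the broadcaster's neighborhood is reassembled each round from nodes drawn from a growing library of reachable ``clean'' configurations. Beginning with the initial configuration $C_0$ and its pair $(\beta_0,\alpha_0)$, one repeatedly picks some already-identified reachable $C_k$, arranges its nodes into a convenient graph---the broadcaster adjacent to the right number of nodes in the state it must see next, an uninformed node attached only through a node whose perturbation is harmless, the rest arranged so their post-round states are again clean---and advances that execution by one round, obtaining a new reachable configuration $C_i$ with pair $(\beta_i,\alpha_i)$. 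Since there are only $s^2$ ordered state-pairs, after at most $s^2$ steps one has accumulated, inside a single $\mathrm{poly}(s)$-node time-varying graph, enough nodes in the required states to supply the broadcaster's complete-graph view for all $x+1 \le s^2+1$ rounds while always keeping an uninformed node. (Should some one-round extension instead land in a configuration where the broadcaster has already terminated with an uninformed node present, we are done even sooner.)

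The contradiction then follows as described: the broadcaster's view in rounds $0,\dots,x$ equals its complete-graph view, so it declares broadcast complete at round $x \le s^2$, while by construction a node is still uninformed then---so $\alg$ does not solve broadcast with termination detection after all. As the constructed network has $n = \mathrm{poly}(s)$ nodes, this exhibits, for each $n$, an $n$-node network on which some node must occupy $n^{\Omega(1)}$ states and hence use $\Omega(\log n)$ bits, which is the asserted space lower bound.

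The main obstacle I anticipate is making the inductive construction of the second paragraph precise: each one-round extension must simultaneously (i) present the broadcaster with exactly the message multiset prescribed by the target complete-graph execution, (ii) keep some node uninformed while every snapshot stays connected, and (iii) leave behind a configuration clean enough to be reused as a building block later---and the bookkeeping must confirm that indexing by the at most $s^2$ state-pairs really does yield enough nodes in the right states within a $\mathrm{poly}(s)$-node graph. A lesser, purely technical point is the apparent circularity between ``$\alg$ uses $s$ states'' and ``the counterexample has $\mathrm{poly}(s)$ nodes''---which dissolves once one runs the construction for an unboundedly growing target length, using that $o(\log n)$ space would force $s = n^{o(1)}$, below the threshold the argument needs.
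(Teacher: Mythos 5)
There is a genuine gap at the center of your proposal: the claim that one can feed the broadcaster its exact complete-graph view in every round $0,\dots,x$ while keeping some node uninformed inside a $\mathrm{poly}(s)$-node, 1-interval-connected time-varying graph. This is precisely the step you flag as ``the main obstacle,'' and it is not bookkeeping---it is the whole proof. It cannot follow from a size bound alone, since correct algorithms with $\Theta(\log n)$ bits per node do exist (count, then wait $n$ rounds); any fooling construction must therefore exploit the assumed shortage of states quantitatively, and your justification (``since there are only $s^2$ ordered state-pairs, after at most $s^2$ steps one has accumulated \dots enough nodes in the required states'') never explains how the pigeonhole produces, at every round $j$ of the one fixed execution on $K_m$, exactly $m-1$ neighbors in the non-broadcaster state $\alpha_j$ and nothing else, while a path of uninformed nodes stays attached somewhere. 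Note also that each one-round extension necessarily contaminates at least one clique node (the one carrying the attachment to the uninformed path and the leftover nodes), so the pool of clean $\alpha$-nodes shrinks by one per step; you cannot in general keep replaying a clique of the \emph{same} size, which is exactly why it is unclear the replay can ever be driven all the way to the terminating round $x$.

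The paper's proof never completes the fooling. It inductively builds reachable configurations $C_i = B_i \cup A_i \cup P_i \cup G_i$ with one broadcaster state $\beta_i$, a pool of $n_0-i-1$ copies of a single state $\alpha_i$, and at least $n_0-i$ uninformed nodes, where each step rearranges a previously reached $C_k$ so that $B_k$ plus $n_0-i$ of the $\alpha_k$-nodes forms $K_{n_0-i+1}$---a clique whose size \emph{shrinks} with $i$ and is matched to a configuration $D_j$ chosen \emph{maximal} in the execution of $\alg$ on $K_{n_0-i+1}$, a different complete graph at each step rather than the single execution you fix at the outset. Correctness and termination are used only to guarantee that such a maximal $j$ exists and satisfies $j<x$; the state bound enters only at the very end, where the pairs $(\beta_i,\alpha_i)$---pairwise distinct by the maximality of $j$---outnumber the at most $f(n_0)^2 < n_0$ available pairs. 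So in the paper the pigeonhole over state pairs is the \emph{final contradiction}, whereas in your plan it is supposed to be the tool that makes the replay possible and the contradiction is the premature termination itself; that inversion leaves your argument without its load-bearing step, and as written the proposal is a plan rather than a proof.
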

\begin{proof}
    Consider any (non-idle-start) algorithm $\alg$ that solves broadcast with termination detection.
    Let $f(n)$ be the maximum number of states that $\alg$ uses when run on dynamic networks of at most $n$ nodes.
    We will show that $f(n) \geq n^{1/2}$ for all $n \geq 1$, implying that $\alg$ uses $\log(f(n)) \geq \log(n^{1/2}) = \Omega(\log n)$ space.
    
    Suppose to the contrary that there exists an $n_0 \geq 1$ such that $f(n_0) < n_0^{1/2}$.
    A configuration $C$ is \textit{reachable} (from an initial configuration) if there exists a time-varying graph $\mathcal{G}$ and time $t$ such that the execution of $\mathcal{A}$ on $\mathcal{G}$ for $t$ rounds results in configuration $C$.
    We will find a sequence of $n_0$ reachable configurations $(C_i)_{i=0}^{n_0 - 1}$ where each $C_i$ is a disjoint union of multisets\footnote{A multiset $X$ is a disjoint union of multisets $Y \cup Z$ if the multiplicity of any element $x \in X$ is the sum of multiplicities of $x$ in $Y$ and $Z$.} of states $B_i \cup A_i \cup P_i \cup G_i$ satisfying:
    
    \begin{enumerate}
    	\item $\lvert C_i \rvert = 2n_0$,
    	\item $B_i = \{\beta_i\}$, where $\beta_i$ is the state of the broadcaster,
    	\item $A_i = \{\alpha_i\}^{n_0 - i - 1}$, exactly $n_0 - i - 1$ copies of the same state $\alpha_i$,
    	\item $\lvert P_i \rvert \geq n_0 - i$ and contains only uninformed states,
    	\item $G_i = C_i \setminus (B_i \cup A_i \cup P_i)$, and
    	\item $(\beta_i, \alpha_i) \neq (\beta_j, \alpha_j)$ for all $j \neq i$.
    \end{enumerate}

    Initially, the broadcaster is in some state $\beta_0$ and all other nodes are uninformed in some state $\alpha_0 \neq \beta_0$.
    Define the initial configuration $C_0$ by letting $B_0 = \{\beta_0\}$, $A_0 = \{\alpha_0\}^{n_0 - 1}$, $P_0 = \{\alpha_0\}^{n_0}$, and $G_0 = \varnothing$.
    Clearly, $C_0$ is reachable and meets the above conditions.
    
    Now consider any $1 \leq i < n_0$ and suppose that $C_0, \ldots, C_{i-1}$ have already been defined; we define $C_i$ recursively as follows (see \figtext~\ref{fig:detect-bound} for an illustration).
    Let $\mathcal{D} = (D_0, \ldots, D_x)$ be the sequence of configurations obtained by running $\mathcal{A}$ on $K_{n_0 - i + 1}$, the complete graph on $n_0 - i + 1$ nodes, where $D_x$ is the first configuration in which the broadcaster declares termination.
    Note that, by the symmetry of the complete graph, each configuration in $\mathcal{D}$ has one state for the broadcaster and one state shared by every other node.
    Let $j \in \{0, \ldots, x\}$ and $k \in \{0, \ldots, i-1\}$ be such that $\beta_k \in D_j$ and $D_j \setminus \{ \beta_k \} \subseteq A_k$ and for any other $j', k'$ fulfilling this condition, $j' \leq j$.
    Note that such a $j$ must exist since, firstly, $D_0$ is the initial configuration of $K_{n_0 - i + 1}$ and thus $\beta_0 \in D_0$ and $D_0 \setminus \{\beta_0\} = \{\alpha_0\}^{n_0 - i} \subseteq \{\alpha_0\}^{n_0 - 1} = A_0$; and secondly, the sequence $\mathcal{D}$ is finite since $\mathcal{A}$ terminates in finite time.
    We also have $j < x$ since otherwise there exists a configuration $C_k$ containing $|P_k| \geq n_0 - k > 0$ uninformed nodes (by induction), but $\beta_k \in D_x$ is a terminating state, contradicting the correctness of $\alg$.
    
    Configuration $C_k$ is reachable, so there is some time-varying graph upon which the execution of $\mathcal{A}$ will within finite time be in configuration $C_k = B_k \cup A_k \cup P_k \cup G_k$.
    We extend this execution by putting this graph into the following topology and executing $\mathcal{A}$ for a single additional round.
    In a slight abuse of notation, we refer here to configurations as sets of nodes instead of multisets of states since edge dynamics make nodes in the same state interchangeable.
    Choose any subset $A_k' \subseteq A_k$ of $n_0 - i$ nodes (which is well-defined since $|A_k| = n_0 - k - 1 \geq n_0 - i$ by induction) and arrange them with $B_k$ as a complete graph $K_{n_0 - i + 1}$.
    Arrange the nodes in $P_k$ as a path.
    Connect these components by attaching an end $p \in P_k$ of the path to some node $a \in A_k'$ and connect every node in $G_k \cup (A_k \setminus A_k')$ only to this node $a$.
    Let $C_i$ be the configuration obtained after executing one round of $\alg$ on this topology.
    Clearly $C_i$ is reachable; we show next that it satisfies the required conditions.
    
    \begin{enumerate}
    	\item $\lvert C_i \rvert = \lvert C_k \rvert = 2n_0$ as desired.
    	
        \item Let $B_i = \{\beta_i\}$ be the state of the unique node in $B_k$ after this one additional round.
    	
        \item Consider any two nodes $a_1, a_2 \in A_k' \setminus \{a\}$ before execution of the round.
        Notice that they are both connected to $B_k$ and every node in $A_k'$ except for themselves.
        By induction, every node in $A_k'$ has the same state $\alpha_k$.
        Thus, $a_1$ and $a_2$ receive the same sets of messages from their neighbors, so they transition to the same state in the next round.
        Let $\alpha_i$ be this state shared by all nodes in $A_k' \setminus \{a\}$ after the round's execution and let $A_i$ be the corresponding configuration.
        Thus $A_i = \{\alpha_i\}^{|A_k'| - 1} = \{\alpha_i\}^{n_0 - i - 1}$ as desired.
    	
        \item Consider any node $v \in P_k \setminus \{p\}$ before execution of the round.
        By induction, $v$ and its neighbors are uninformed.
        Thus, $v$ receives no message from an informed node in the subsequent round and remains uninformed.
        Let $P_i$ be the (states of) nodes $P_k \setminus \{p\}$.
        By induction, we have $\lvert P_i \rvert = |P_k| - 1 \geq n_0 - k - 1 \geq n_0 - i$ as desired.
        
        \item $G_i = C_i \setminus (B_i \cup A_i \cup P_i)$ is simply defined as all the remaining nodes from $C_k$.
        
        \item Suppose for contradiction that $(\beta_i, \alpha_i) = (\beta_\ell, \alpha_\ell)$ for some $\ell \in \{0, \dots, i-1\}$.
        Recall that the configuration $D_j = \{\beta_k\} \cup \{\alpha_k\}^{n_0 - i}$ was obtained by running $\alg$ on $K_{n_0 - i + 1}$ and $D_j$ is not the final configuration of that execution.
        Moreover, observe that when we extended the execution reaching $C_k$ by one round, we reconstructed this exact configuration and topology by arranging $A_k' \cup B_k$ as a complete graph.
        When running $\alg$ for one additional round on this component, the broadcaster in $D_j$ and $B_k$ transitions from $\beta_k$ to $\beta_i$ and the non-broadcasters in $D_j$ and $A_k' \setminus \{a\}$ transition from $\alpha_k$ to $\alpha_i$.
        Thus, $D_{j+1} = \{\beta_i\} \cup \{\alpha_i\}^{n_0 - i} = \{\beta_\ell\} \cup \{\alpha_\ell\}^{n_0 - i}$, by supposition.
        But this implies $\beta_\ell \in D_{j+1}$ and $D_{j+1} \setminus \{\beta_\ell\} = \{\alpha_\ell\}^{n_0 - i} \subseteq \{\alpha_\ell\}^{n_0 - \ell} = A_\ell$, contradicting the maximality of $j$.
    \end{enumerate}

    Thus, we obtain the desired sequence of $n_0$ reachable configurations $(C_i)_{i=0}^{n_0 - 1}$ and their corresponding distinct state pairs $(\beta_i, \alpha_i)$.
    The initial state pair $(\beta_0, \alpha_0)$ appears in every execution of $\alg$ and the remaining state pairs $(\beta_i, \alpha_i)$ for $i \geq 1$ appear in executions of $\alg$ on complete graphs $K_{n_0 - i + 1}$ of at most $n_0$ nodes.
    But the maximum number of states $\alg$ can use on dynamic networks of at most $n_0$ nodes is $f(n_0) < n_0^{1/2}$, so there are only $f(n_0)^2 < n_0$ distinct state pairs, a contradiction.
\end{proof}

\section{Memory Lower Bound for Stabilizing Termination from Idle-Start} \label{sec:stabilizing-bound}

We now turn our attention from termination detection to stabilizing termination, requiring only that all nodes are eventually informed and stop sending messages.\
In this section, we prove that any idle-start algorithm solving broadcast with stabilizing termination must use superconstant memory.
This shows that stabilizing broadcast is strictly harder in dynamic networks than in the static setting, which has a trivial constant-memory algorithm (if uninformed and receiving a message, become informed and forward the message to all neighbors) and even permits an algorithm with no persistent memory at all~\cite{Hussak2019-terminationflooding,Hussak2020-terminationflooding}.

Our proof will use similar time-varying graphs as in the proofs of Theorems~\ref{thm:detection-impossibility} and~\ref{thm:detection-bound}.
However, those proofs derived contradictions from the broadcaster declaring termination too early, a condition that cannot be used in the case of stabilizing termination.
Instead, we suppose a constant memory algorithm exists and use it to create an infinite sequence of configurations satisfying some very specific properties.
We show in Lemma~\ref{lem:alpha}, however, that no infinite sequence with these properties exists.

In Lemma~\ref{lem:f}, we show that the time an algorithm takes to stabilize on a static complete graph, starting from any reachable configuration, depends only on the number of states used and not on the number of nodes in the graph.
This will be useful in our proof since we consider algorithms using constant memory, thus implying a fixed bound on the stabilization time for any complete graph, regardless of size.
To this end, we define the following function.

\begin{definition} \label{def:f}
    Let $f(k)$ be the minimum number of rounds such that for any idle-start algorithm $\alg$ solving broadcast with stabilizing termination using at most $k$ states and any configuration $C$ reachable by $\alg$, an execution of $\alg$ on a static complete graph starting in configuration $C$ stabilizes within $f(k)$ rounds.
\end{definition}

We first prove that this function is well-defined and bounded.

\begin{lemma} \label{lem:f}
	$f(k) \leq 3k!$
\end{lemma}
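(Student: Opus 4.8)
The plan is to track how $\alg$ evolves on the static complete graph $K_m$ (where $m=|C|$) and show that the time to stabilize is governed only by the number of states $k$, not by $m$. First I would record the key structural fact about complete graphs: every node receives the full multiset of messages sent in the round minus the single message it itself sent, so any two nodes in the \emph{same} state receive identical message multisets and therefore transition identically. Hence the partition $\pi_t$ of $V$ into same‑state classes only ever coarsens as $t$ grows, and since $\alg$ uses at most $k$ states, $\pi_0$ has at most $k$ classes. Consequently the sequence $\pi_0 \preceq \pi_1 \preceq \cdots$ realizes at most $k$ distinct partitions $\rho_1,\ldots,\rho_p$ with strictly decreasing class counts $|\rho_1| > \cdots > |\rho_p| \ge 1$, each occupying a contiguous block $I_q$ of rounds.

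Within a single \emph{phase} where the partition is fixed at $\rho_q$ with class sizes $(n_1,\ldots,n_r)$ and (necessarily distinct) class states $(s_1,\ldots,s_r)$, the message multiset a class receives is a deterministic function of $(s_1,\ldots,s_r)$ and the $n_i$; so the state tuple evolves by a fixed map $F_{\rho_q}$. I would then argue the state tuples occurring within one phase are pairwise distinct. A repeated tuple makes the sequence eventually periodic from that point; in a non‑terminal phase this forces the partition to keep coarsening forever (impossible, class counts are positive integers), and in the terminal phase (the one containing the stabilization time $t^{\mathrm{stab}}$) it forces the all‑idle configuration to recur strictly before $t^{\mathrm{stab}}$, contradicting its minimality. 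I also need the basic point that the execution on $K_m$ from $C$ stabilizes at all: since $C$ is reachable, it arises from running $\alg$ on some time‑varying graph $\mathcal{G}$, and appending $K_m$ snapshots forever gives a legal $1$‑interval‑connected graph on which the correct stabilizing algorithm $\alg$ must make all nodes idle in finite time.

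Since the number of $r$‑tuples of distinct states drawn from $k$ states is exactly $k!/(k-r)!$, the pairwise‑distinctness bound gives $|I_q| \le k!/(k-|\rho_q|)!$. Because the phases have strictly decreasing class counts (so the $|\rho_q|$ are distinct integers in $[1,k]$), summing over phases yields
\[
t^{\mathrm{stab}} + 1 \;\le\; \sum_{q=1}^{p}\frac{k!}{(k-|\rho_q|)!} \;\le\; \sum_{r=1}^{k}\frac{k!}{(k-r)!} \;=\; k!\sum_{j=0}^{k-1}\frac{1}{j!} \;<\; e\,k! \;<\; 3k!,
\]
so every such execution stabilizes within $3k!$ rounds and $f(k) \le 3k!$ (this also establishes that $f$ is well defined). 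I expect the main obstacle to be proving pairwise‑distinctness of state tuples within each phase — especially handling the terminal phase and edge cases such as the all‑idle tuple — tightly enough that each phase contributes at most $k!/(k-r)!$ rounds with no additive slack, since that is precisely what makes the sum come out below $3k!$ rather than $\Theta(k\cdot k!)$.
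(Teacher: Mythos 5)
Your proof is correct and takes essentially the same route as the paper's: on a static complete graph, same-state nodes receive identical message multisets so the state partition only coarsens, determinism plus reachability (appending complete-graph snapshots to the graph reaching $C$) rules out repeating a configuration before stabilization, and the resulting count $\sum_{\ell=1}^{k}\binom{k}{\ell}\,\ell! = k!\sum_{i=0}^{k-1}1/i! < e\,k! < 3k!$ bounds the stabilization time. Your explicit phase decomposition and handling of the terminal/all-idle edge case are just a more detailed organization of the paper's same counting argument.
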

\begin{proof}
    Let $\alg$ be an idle-start algorithm solving broadcast with stabilizing termination using only $k$ states and consider its execution on a static complete graph starting from some reachable configuration $C$. 
	At each round, consider partitioning the nodes into $\ell \leq k$ sets by their current state.
    In each round, any nodes with the same state receive the same messages from their neighbors and transition to the same next state, so nodes sharing a state continue to do so throughout the execution.
	Thus, although the number of sets $\ell$ can decrease over time as some nodes converge to the same state, it can never increase.
	If there are $\ell$ sets of nodes with distinct states at one time, there are $\binom{k}{\ell}$ possibilities for what these states are and $\ell!$ ways for these states to be assigned to the sets.
    Again, since the number of of sets can decrease, $\ell$ may take on any value in $\{1, \ldots, k\}$ throughout the execution.

    Since $\alg$ is deterministic, if the same assignment of states to the same sets ever occurs twice in an execution, the algorithm must be in a loop.
    However, this can not happen since $C$ is reachable and thus $\alg$ must stabilize in finite time.
	Thus, the maximum number of configurations $\alg$ can visit before stabilizing is
	\[\sum_{\ell=1}^k \binom{k}{\ell} \cdot \ell!
    = \sum_{\ell=1}^k \frac{k!}{(k-\ell)!}
    = k! \cdot \sum_{i=0}^{k-1} \frac{1}{i!}
    \leq k! \cdot e \qedhere\]
\end{proof}

Our next lemma sets us up for the contradiction in Theorem~\ref{thm:stable-bound}.

\begin{lemma} \label{lem:alpha}
	Let $\mathcal{S} = \{S_1, S_2, \dots\}$ be a collection of multisets $S_i$ which are each the disjoint union of multisets $A_i \cup B_i$ with elements from $[n] = \{1, \dots, n\}$ satisfying $\lvert A_i \rvert = k$, $\lvert B_i \rvert$ is finite, and $\lnot (A_j = A_i \wedge B_j \subseteq B_i)$ for all $i \neq j$.
    Then $\lvert \mathcal{S} \rvert$ is finite.
\end{lemma}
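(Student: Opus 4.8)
The plan is to derive a contradiction by assuming $\lvert \mathcal{S} \rvert$ is infinite and extracting from $\mathcal{S}$ an infinite antichain of finite multisets over the fixed finite ground set $[n]$, which cannot exist by Dickson's lemma.

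The first step is to neutralize the $A_i$ component by pigeonhole. Each $A_i$ is a multiset of size exactly $k$ with elements from $[n]$, so there are only finitely many possible values for $A_i$ (namely $\binom{n+k-1}{k}$). Hence, if $\mathcal{S}$ were infinite, some single value $A$ would be the $A$-part of infinitely many members of $\mathcal{S}$; after re-indexing, we obtain an infinite subcollection $\mathcal{S}' = \{A \cup B_{i_1}, A \cup B_{i_2}, \dots\}$. For any two distinct members of $\mathcal{S}'$, the hypothesis $\lnot(A_j = A_i \wedge B_j \subseteq B_i)$, applied with both $A$-parts equal to $A$, reduces to $B_j \not\subseteq B_i$; applying this in both orderings shows that the multisets $B_{i_1}, B_{i_2}, \dots$ are pairwise incomparable under multiset inclusion (and in particular pairwise distinct). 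Thus an infinite $\mathcal{S}$ yields an infinite antichain of finite multisets over $[n]$.

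The second step identifies each finite multiset $B$ over $[n]$ with its multiplicity vector $(m_1, \dots, m_n) \in \mathbb{N}^n$, where $m_\ell$ counts the copies of $\ell$ in $B$; under this bijection, $B \subseteq B'$ holds exactly when the corresponding vectors satisfy $m_\ell \le m'_\ell$ for all $\ell$. So it suffices to show that $\mathbb{N}^n$ with the coordinatewise order has no infinite antichain, which is Dickson's lemma. I would either cite it or give the short self-contained proof by induction on $n$: the case $n = 1$ is immediate since $(\mathbb{N}, \le)$ is totally ordered; for the inductive step, given an infinite sequence $v_1, v_2, \dots$ in $\mathbb{N}^{n+1}$, first pass to an infinite subsequence on which the first coordinates are non-decreasing (possible since every infinite sequence of naturals has a non-decreasing infinite subsequence), then apply the induction hypothesis to the projections of that subsequence onto the last $n$ coordinates to obtain indices $s < t$ whose projections are coordinatewise ordered, and combine to get $v_s \le v_t$; hence $\mathbb{N}^{n+1}$ admits no infinite \emph{bad} sequence, and in particular no infinite antichain. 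This contradicts the antichain produced in the first step, so $\lvert \mathcal{S} \rvert$ must be finite.

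I do not anticipate a genuine obstacle here. The only points requiring care are the bookkeeping in the reduction—verifying that, once the $A$-parts are equalized, the negation $\lnot(A_j = A_i \wedge B_j \subseteq B_i)$ really forces pairwise incomparability rather than something weaker—and phrasing the induction for Dickson's lemma in terms of bad sequences so that the inductive call is legitimate.
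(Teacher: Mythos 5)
Your proof is correct, and its first half coincides exactly with the paper's: both arguments pigeonhole on the finitely many possible size-$k$ multisets $A_i$ over $[n]$ to extract an infinite subcollection with a common $A$-part, and both observe that the hypothesis then forces the corresponding $B$-parts to be pairwise incomparable under multiset inclusion. Where you diverge is in how that infinite antichain is killed. You pass to multiplicity vectors in $\mathbb{N}^n$ and invoke Dickson's lemma, proved by the standard induction on dimension (extract a subsequence with non-decreasing first coordinate, apply the inductive hypothesis to the projections to find a good pair), i.e., you establish the full well-quasi-order statement that $\mathbb{N}^n$ has no infinite bad sequence. The paper instead stays with multisets and runs a direct nested pigeonhole: fixing one reference multiset $X$, every other $B'$ must have strictly fewer copies of some element than $X$ (this is where $X \not\subseteq B'$ is used), so some element $i_1$ and some multiplicity below $\texttt{\#}(X,i_1)$ occur for infinitely many $B'$; iterating over fresh elements $i_2, \dots, i_n$ pins down all $n$ multiplicities and produces infinitely many identical multisets, contradicting pairwise non-inclusion. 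The two routes prove the same combinatorial core; yours is the more standard and more general formulation (and citing Dickson's lemma would suffice on its own), while the paper's is self-contained, avoids the vector-order machinery, and gets by with pigeonhole alone because the incomparability against a fixed $X$ bounds the relevant multiplicities. Your one point needing care---that equal $A$-parts turn the hypothesis into two-sided non-inclusion, hence a genuine antichain of distinct multisets---is handled correctly in your write-up.
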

\begin{proof}
    Suppose for contradiction that $\mathcal{S}$ is an infinite collection fulfilling these conditions.
    Since each $A_i$ contains elements from $[n]$ and $\lvert A_i \rvert = k$, there are only a finite number of possible definitions for the multiset $A_i$.
    Since $\mathcal{S}$ is infinite, it must contain some infinite subcollection $\mathcal{S}' = \{S_1', S_2', \ldots \} \subseteq \mathcal{S}$ such that $A_i' = A_j'$ for all $S_i', S_j' \in \mathcal{S}'$.
    Thus, for $\mathcal{S}$ to fulfill the conditions, we must have $B_i' \not \subseteq B_j'$ for all $S_i' \neq S_j'$.
    We will derive a contradiction by finding multisets in $\mathcal{X}_0 = \{B_i' \mid S_i' \in \mathcal{S}'\}$ such that one is a subset of the other.
    In fact, we show something much more general: $\mathcal{X}_0$ contains an infinite subcollection of equivalent multisets, i.e., multisets containing the same elements with the same multiplicities.

    We will use $\texttt{\#}(S, i)$ to denote the multiplicity of element $i \in [n]$ in multiset $S$.
    Choose any multiset $X \in \mathcal{X}_0$.
    For each $i \in [n]$, let $\mathcal{X}_1^{(i)} = \{B' \in \mathcal{X}_0 \setminus \{ X \} \mid \texttt{\#}(B', i) < \texttt{\#}(X, i)\}$ be the collection of multisets of $\mathcal{X}_0$ containing fewer instances of $i$ than $X$.
    Each $B' \in \mathcal{X}_0 \setminus \{X\}$ must exist in at least one of these collections since $X \not\subseteq B'$.
    But there are only $n$ collections $\mathcal{X}_1^{(1)}, \ldots, \mathcal{X}_1^{(n)}$, and $\mathcal{X}_0$ is infinite, so there must exist a collection $\mathcal{X}_1^{(i_1)}$ that is infinite.
    For each $0 \leq j < \texttt{\#}(X, i_1)$, let $\mathcal{X}_1^{(i_1,j)} = \{B' \in \mathcal{X}_1^{(i_1)} \mid \texttt{\#}(B', i_1) = j\}$ be the multisets in $\mathcal{X}_1^{(i_1)}$ containing exactly $j$ instances of element $i_1$.
    Once again, there are infinitely many multisets in $\mathcal{X}_1^{(i_1)}$ but only a finite range of multiplicities $j$, so at least one collection $\mathcal{X}_1^{(i_1,j)}$ is infinite.
    Call this one $\mathcal{X}_1$.

    Next define $\mathcal{X}_2$ in a similar way, but using $\mathcal{X}_1$ in place of $\mathcal{X}_0$.
    By our construction, every multiset in $\mathcal{X}_1$ contains the same number of instances of element $i_1 \in [n]$.
    When defining $\mathcal{X}_2$ then, we will have $\mathcal{X}_2^{(i_1)} = \varnothing$.
    Thus, there will be an element $i_2 \in [n]$ with $i_2 \neq i_1$ such that every multiset in $\mathcal{X}_2$ has the same number of instances of $i_2$.
    Since $\mathcal{X}_2 \subset \mathcal{X}_1$, every multiset in $\mathcal{X}_2$ has the same number of instances of both $i_1$ and $i_2$.
    If we continue to define the collections $\mathcal{X}_3, \dots, \mathcal{X}_n$ in this way, all multisets in each $\mathcal{X}_j$ will have the same numbers of instances of $i_1, \ldots, i_j$.
    Thus, for every $X, Y \in \mathcal{X}_n$ and $i \in [n]$, $\texttt{\#}(X,i) = \texttt{\#}(Y,i)$.
    But then $\mathcal{X}_n$ is an infinite subcollection of equivalent multisets in $\mathcal{X}_0$, a contradiction.
\end{proof}

\begin{figure}[t]
    \centering
    \includegraphics[width=\textwidth]{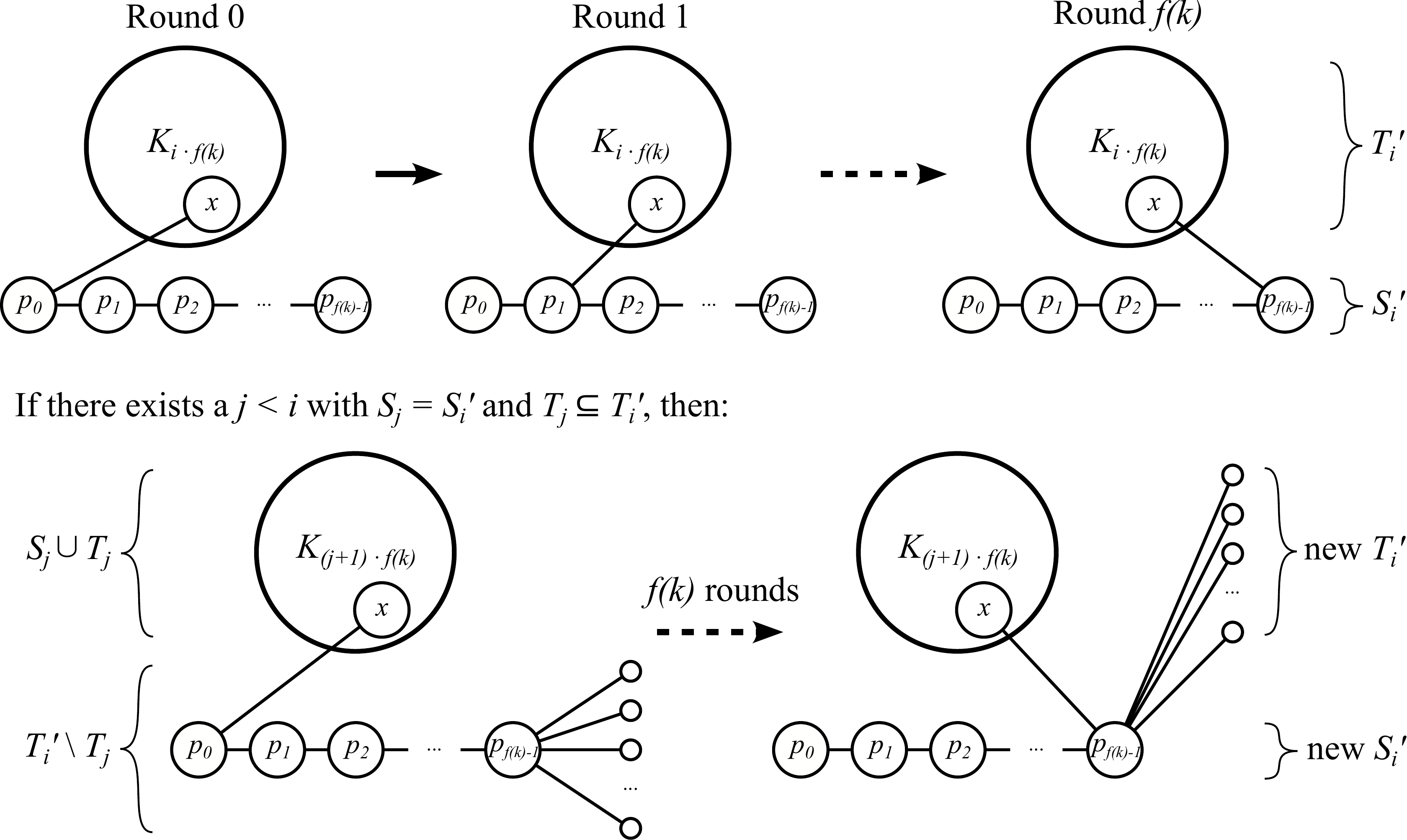}
    \caption{The time-varying graph structures used in the proof of Theorem~\ref{thm:stable-bound}.}
    \label{fig:stable-bound}
\end{figure}

We now prove our superconstant memory lower bound.

\begin{theorem} \label{thm:stable-bound}
    Any idle-start algorithm that solves broadcast with stabilizing termination for anonymous, synchronous, 1-interval connected dynamic networks must use $\omega(1)$ memory.
\end{theorem}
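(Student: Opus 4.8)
The plan is to assume, toward a contradiction, that some idle-start algorithm $\alg$ solves broadcast with stabilizing termination while using only $\bigo{1}$ memory, hence at most $k$ states for some fixed constant $k$. By Lemma~\ref{lem:f}, starting from any reachable configuration, $\alg$ stabilizes within $T := f(k) \le 3k!$ rounds on every static complete graph -- a bound that does not depend on the graph's size. With this in hand, the goal is to build an infinite family of reachable configurations whose existence Lemma~\ref{lem:alpha} forbids.

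To build the family I would proceed inductively, reusing the sliding-attachment time-varying graphs of Theorems~\ref{thm:detection-impossibility} and~\ref{thm:detection-bound}. Each reachable configuration $C_i$ we record is viewed as a multiset of states (anonymity makes equal-state nodes interchangeable) and is decomposed as a disjoint union $A_i \cup B_i$, where $A_i$ is a bounded-size ``signature'' -- for instance, the at most two states that, by symmetry, occur in a complete-graph core sitting at the heart of $C_i$ -- and $B_i$ is a finite multiset that in particular contains at least $i$ idle, uninformed vertices arranged in a path. The inductive step mirrors the one in Theorem~\ref{thm:detection-bound}: run $\alg$ on a complete graph of appropriate size, which by Lemma~\ref{lem:f} must stabilize within $T$ rounds; let $j$ be the \emph{last} round of this execution whose configuration ``matches'' some previously recorded $C_\ell$ in the sense that their signatures coincide and the round-$j$ configuration embeds into $A_\ell \cup B_\ell$; then realize $C_\ell$ (it is reachable), arrange its matching core as that complete graph with its idle uninformed path attached, and run one additional round to obtain $C_i$. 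Because $\alg$ is idle-start, the attached path neither perturbs the core nor loses its idle, uninformed status except at its endpoint, so $C_i$ is reachable and its uninformed path has grown by one; and the maximality in the choice of $j$ is exactly what forces $\lnot(A_\ell = A_i \wedge B_\ell \subseteq B_i)$ for every previously recorded $\ell$, just as maximality of $j$ in Theorem~\ref{thm:detection-bound} forced each state pair $(\beta_i,\alpha_i)$ to be new.

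This produces an infinite collection $\{A_i \cup B_i\}$ with $\lvert A_i \rvert$ bounded by a constant, each $B_i$ finite, and $\lnot(A_j = A_i \wedge B_j \subseteq B_i)$ for all $i \ne j$ -- precisely the hypotheses of Lemma~\ref{lem:alpha}, which then says the collection is finite, a contradiction. Hence no such $\alg$ exists, and stabilizing broadcast in this setting requires $\omega(1)$ memory.

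I expect the crux to be engineering the signature decomposition $(A_i, B_i)$ and the ``matching'' relation so that three requirements hold together: (i) $\lvert A_i \rvert$ stays a fixed constant as $i$ grows -- this is exactly where the $\bigo{1}$-memory hypothesis is used, and why we need the antichain-style statement of Lemma~\ref{lem:alpha} rather than the cruder distinctness argument available in Theorem~\ref{thm:detection-bound}; (ii) maximality of the matching round $j$ certifies non-domination of $C_i$ against all earlier $C_\ell$; and (iii) the inductive step is always executable, which relies on Lemma~\ref{lem:f} to guarantee that the complete-graph execution we mine actually stabilizes within a bounded number of rounds (in contrast to Theorem~\ref{thm:detection-bound}, where the analogous execution was mined for the round in which the broadcaster declared termination). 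Reconciling (i)--(iii) -- in particular, verifying that extending a matched execution by a single round against an idle-start adversary cannot enlarge the signature while still lengthening the uninformed path -- is where the real effort goes.
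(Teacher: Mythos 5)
Your overall skeleton does match the paper's: assume a $k$-state idle-start algorithm, invoke Lemma~\ref{lem:f} for a size-independent stabilization bound $f(k)$ on complete graphs, and build an infinite family of reachable configurations whose non-domination structure contradicts Lemma~\ref{lem:alpha}, using sliding-edge time-varying graphs. The gap is in the mechanism you propose for securing the non-domination condition. You import from Theorem~\ref{thm:detection-bound} the device ``take the \emph{last} round $j$ of the complete-graph execution that matches an earlier $C_\ell$, extend by one round, and let maximality of $j$ certify newness.'' In Theorem~\ref{thm:detection-bound} that device works only because termination detection guarantees $j$ is not the final round: matching the final configuration would mean a terminating broadcaster coexisting with uninformed nodes, an immediate correctness violation. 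With stabilizing termination there is no such violation available (the paper flags exactly this), and the reference execution on the complete graph ends in a stable, all-idle configuration. Nothing prevents your latest matching round from being that stabilized round; in that case extending the realized arrangement by one round changes nothing at all---the core is idle, the attached path is idle, and under an idle-start algorithm idle nodes send no messages---so the resulting configuration has the same signature as $C_\ell$ and contains a superset of its idle nodes, i.e., $\lnot(A_\ell = A_i \wedge B_\ell \subseteq B_i)$ fails and the induction stalls. Since only constantly many signatures exist, this failure mode is generic, not exceptional.

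What the paper does instead, and what your proposal is missing, is an \emph{iterated} extension rather than a one-round extension: whenever the freshly obtained pair $(S_i', T_i')$ is dominated by some earlier $C_j$, it re-realizes $C_j$, arranges it as a complete graph, attaches a fresh path of $f(k)$ idle nodes taken from $T_i' \setminus T_j$, runs another $f(k)$ rounds with the sliding edge, and repeats as long as domination persists. The heart of the proof is showing this loop terminates, and that is where the full correctness of $\alg$ (every node must be informed, not merely that executions stabilize quickly) enters, via the extra-node trick: if the loop ran forever, attaching one additional node to an idle node in every round yields an identical execution in which that node never leaves its initial uninformed state, so the execution can neither stabilize (it would strand an uninformed node) nor keep non-idle nodes forever (contradicting stabilizing termination). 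Your proposal uses correctness only through Lemma~\ref{lem:f}, which bounds stabilization time but cannot by itself force a new, non-dominated configuration to appear; without an argument of this kind---some way to inject fresh idle nodes and argue that the system must eventually react to them---your requirement (ii) cannot be discharged, and the construction of the infinite family does not go through.
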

\begin{proof}
    Suppose for contradiction that $\alg$ is an idle-start algorithm with $\Theta(1)$ space complexity that solves broadcast with stabilizing termination.
    Then there is a constant $k$ such that $\alg$ never uses more than $k$ states.
    Note that throughout the rest of this proof, $f(k)$ is well-defined and has finite value by Lemma~\ref{lem:f}.
    We will show that the existence of this algorithm $\alg$ contradicts Lemma~\ref{lem:alpha} by constructing an infinite sequence of reachable configurations $(C_i)_{i=0}^\infty$ where each $C_i$ is a disjoint union of multisets $S_i \cup T_i$ satisfying:
    
    \begin{enumerate}
        \item $\lvert S_i \rvert = f(k)$,
        \item $\lvert T_i \rvert = i \cdot f(k)$, and
        \item $\neg (S_j = S_i \land T_j \subseteq T_i)$ for all $j \neq i$.
    \end{enumerate}

    Initially, the broadcaster is in some state $\beta_0$ and all other nodes are in some other state $\alpha_0 \neq \beta_0$.
    Define the initial configuration $C_0$ by letting $S_0 = \{\beta_0\} \cup \{\alpha_0\}^{f(k) - 1}$ and $T_0 = \varnothing$.
    Clearly, $C_0$ is reachable and satisfies the above conditions.
    
    Now consider any $i \geq 1$ and suppose configurations $C_0, \dots, C_{i-1}$ have already been defined; we inductively define $C_i$ as follows.
    Construct a time-varying graph $\mathcal{G}$ from two static components: the complete graph $K_{i \cdot f(k)}$ on $i \cdot f(k)$ nodes (including the broadcaster) and a path $p_0p_1 \cdots p_{f(k) - 1}$ (\figtext~\ref{fig:stable-bound}, top).
    In each round $t \in \{0, \ldots, f(k) - 1\}$, these components are connected by a single edge $\{x, p_t\}$, where $x \in K_{i \cdot f(k)}$ is some fixed node.
    It can be easily shown---as we did by induction in the proof of Theorem~\ref{thm:detection-impossibility}---that the nodes in $K_{i \cdot f(k)}$ send the same messages and transition to the same states in an execution of $\alg$ on $\mathcal{G}$ as they would in an execution of $\alg$ on $K_{i \cdot f(k)}$ by itself.
    By definition, these executions must stabilize on $K_{i \cdot f(k)}$ within $f(k)$ rounds.
    At this point, let $S_i'$ and $T_i'$ be the multisets of states of the nodes in the path and complete graph, respectively.

    If there is no $j \in \{0, \dots, i-1\}$ such that $S_j = S_i'$ and $T_j \subseteq T_i'$, then we define configuration $C_i$ as $S_i = S_i'$ and $T_i = T_i'$.
    Otherwise, if such a $j$ does exist, we extend the execution of $\alg$ on $\mathcal{G}$ as follows (\figtext~\ref{fig:stable-bound}, bottom).
    Arrange the nodes of $C_j = S_j \cup T_j$ as the complete graph $K_{(j+1) \cdot f(k)}$ and any $f(k)$ nodes from $T_i' \setminus T_j$ as a path $p_0p_1 \cdots p_{f(k) - 1}$; attach the remaining nodes to the far end $p_{f(k) - 1}$ of the path.
    Then repeat the same process as before, executing $\alg$ for $f(k)$ rounds when these components are connected in round $t$ by a single edge $\{x, p_t\}$, where $x \in K_{(j+1) \cdot f(k)}$ is any fixed node.
    Configuration $C_j$ is reachable by induction, and again all nodes of $K_{(j+1) \cdot f(k)}$ must send the same messages and transition to the same states in the execution of $\alg$ on $\mathcal{G}$ as they would in an execution of $\alg$ on $K_{(j+1) \cdot f(k)}$ alone.
    So, by definition of $f(k)$, all nodes in $K_{(j+1) \cdot f(k)}$ must be idle after these $f(k)$ rounds.
    Also, $p_{f(k) - 1}$ remains idle throughout this extended execution, so the initially idle non-path nodes attached to it are also idle at this time.
    Redefine $S_i'$ as the $f(k)$ path nodes and $T_i'$ as all other nodes, which as we've argued will all be idle.
    Again, if there is no $j \in \{0, \dots, i-1\}$ such that $S_j = S_i'$ and $T_j \subseteq T_i'$, then define configuration $C_i$ as $S_i = S_i'$ and $T_i = T_i'$; otherwise, extend the execution of $\alg$ on $\mathcal{G}$ by another $f(k)$ rounds as above.

    Suppose for contradiction that the condition for defining $C_i$ is never met and the execution of $\alg$ on $\mathcal{G}$ is extended forever.
    In every round of this execution, there is at least one idle node.
    So consider the execution of $\alg$ on a modified $\mathcal{G}$ containing an extra node that is attached to some idle node in each round.
    This execution is identical to the one on $\mathcal{G}$, but the extra node would never leave its initial state $\alpha_0$.
    Thus, there must also be non-idle nodes in every round, or else this execution stabilizes with uninformed nodes, contradicting the correctness of $\alg$.
    But then this infinite execution contains non-idle nodes in every round, contradicting the supposition that $\alg$ eventually stabilizes.
    Thus, multisets $S_i'$ and $T_i'$ with the desired condition will be found in finite time.
    Clearly, $C_i = S_i \cup T_i = S_i' \cup T_i'$ is reachable; we conclude by showing it satisfies the required conditions.
    
    \begin{enumerate}
        \item Every intermediate $S_i'$ is defined as the states of nodes in the path components which always comprise $f(k)$ nodes.
        Thus, $|S_i| = f(k)$.

        \item The time-varying graph $\mathcal{G}$ is defined on $|K_{i \cdot f(k)}| + |p_0p_1 \cdots p_{f(k) - 1}| = (i+1) \cdot f(k)$ nodes.
        Since $\lvert S_i \rvert = f(k)$, we have $\lvert T_i \rvert = i \cdot f(k)$.

        \item Consider any $j \neq i$.
        If $j < i$ then this condition must be met since the execution defining $C_i$ only stopped once it was true.
        Otherwise, if $j > i$, then by Condition 2, $\lvert T_j \rvert > \lvert T_i \rvert$ and thus $T_j \not \subseteq T_i$ is trivially true.
    \end{enumerate}

    Thus, the infinite sequence $(C_i)_{i=0}^{\infty}$ can be defined fulfilling all of the conditions above.
    But this contradicts Lemma~\ref{lem:alpha} stating that all such sequences are finite.
\end{proof}

\section{A Logspace Algorithm for Stabilizing Termination} \label{sec:stabilizing-algo}

In this section, we shift our attention from impossibilities and lower bounds to an idle-start algorithm called \countalg\ that solves broadcast with stabilizing termination in our anonymous, dynamic setting.
This algorithm uses $\bigo{\log n}$ memory---which is not far from the $\omega(1)$ lower bound of Theorem~\ref{thm:stable-bound}---and stabilizes in $\bigo{n}$ rounds which is worst-case asymptotically optimal.

\begin{algorithm}[t]
    \caption{\countalg\ for Node $v$}
    \label{alg:stabilizing}
    \begin{algorithmic}[1]
        \Statex \textbf{Initialization}. Set $\Current$ to $0$ and $\Maximum$ to $1$ if $v$ is the broadcaster and both variables to $-1$ otherwise.
        
        \medskip

        \Statex \textbf{Sending Messages}.
        \If {$\Current \neq -1$}
            \State \textsc{Send}: $\msg{msg}{\Current, \Maximum}$  \label{alg:stabilizing:msg}
        \EndIf

        \medskip

        \Statex \textbf{State Transitions}.        
        \If {$\Current \neq -1$}
            \State $\Current \gets \Current - 1$  \label{alg:stabilizing:decrement}
        \ElsIf {a message $\msg{msg}{c, m}$ was received}
            \If {$c = 0$}  \Comment{Initiate a new attempt.}
                \State $\Current \gets 2m$  \label{alg:stabilizing:reset-i}
                \State $\Maximum \gets 2m$  \label{alg:stabilizing:reset-f}
            \ElsIf {$c > 0$}  \Comment{Join the ongoing attempt.}
                \State $\Current \gets c - 1$  \label{alg:stabilizing:join-i}
                \State $\Maximum \gets m$  \label{alg:stabilizing:join-f}
            \EndIf
        \EndIf
	\end{algorithmic}
\end{algorithm}

At a high level, the \countalg\ algorithm (Algorithm~\ref{alg:stabilizing}) coordinates a sequence of broadcast attempts, each lasting twice as many rounds as its predecessor until one succeeds.
To facilitate these attempts, nodes store two values: \Current, the number of rounds remaining in the current attempt; and \Maximum, the total duration of the current attempt.
In each round, non-idle nodes involved in an ongoing attempt broadcast their \Current\ and \Maximum\ values to their neighbors and then decrement \Current.
Idle nodes that were previously not involved in the attempt but receive these messages will join in by setting their own \Current\ and \Maximum\ values accordingly.
This continues until messages are sent with $\Current = 0$, indicating the end of the current attempt.
If any idle node receives such a message, it detects that the broadcast should have gone on for longer.
It responds by initiating a new broadcast attempt whose duration is double the previous one.
These attempts continue until some attempt makes all nodes non-idle, at which point no node will initiate another attempt and the algorithm will stabilize.

Before analyzing this algorithm's correctness and complexity, we define some notation.
Let $v.\texttt{var}_t$ denote the value of variable \texttt{var} in the state of node $v$ at time $t$ (i.e., the start of round $t$).
In this notation, the \countalg\ algorithm initializes the broadcaster $b$ with $b.\Current_0 = 0$ and $b.\Maximum_0 = 1$ and all other nodes $v \neq b$ with $v.\Current_0 = v.\Maximum_0 = -1$.
Denote the set of non-idle nodes in round $t$ as $S_t = \{v \in V : v.\Current_t \neq -1\}$.
We begin our analysis by proving that all non-idle nodes share the same \Current\ and \Maximum\ values.

\begin{lemma} \label{lem:stablealg-synchronize}
    For all times $t$ and any non-idle node $v \in S_t$, we have $v.\Current_t = c_t$ and $v.\Maximum_t = m_t$, where $c_0 = 0$, $m_0 = 1$, and
    \[(c_{t+1}, m_{t+1}) = \left\{ \begin{array}{ll}
        (2m_t, 2m_t) & \text{if $c_t = 0$}; \\
        (c_t - 1, m_t) & \text{otherwise}.
    \end{array} \right.\]
\end{lemma}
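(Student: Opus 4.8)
The plan is to prove this by induction on $t$, with a short case analysis on how a node comes to be non-idle at time $t+1$. Before the main induction I would record two trivial auxiliary facts about the sequences defined by the recurrence: $c_t \geq 0$ and $m_t \geq 1$ for all $t$. Both follow by induction from $c_0 = 0$, $m_0 = 1$: if $c_t = 0$ then $c_{t+1} = 2m_t \geq 2$ and $m_{t+1} = 2m_t \geq 2$, and otherwise $c_{t+1} = c_t - 1 \geq 0$ and $m_{t+1} = m_t$. These are needed so that the split into ``$c_t = 0$'' and ``$c_t > 0$'' is exhaustive and so that $2m_t \neq -1$ (i.e., a node opening a new attempt does become non-idle). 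For the base case $t = 0$: by the \textbf{Initialization} step, the broadcaster $b$ has $b.\Current_0 = 0 = c_0$ and $b.\Maximum_0 = 1 = m_0$, while every other node has $\Current_0 = -1$ and is idle; hence $S_0 = \{b\}$ and the claim holds.

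For the inductive step, assume the claim holds at time $t$ and let $v \in S_{t+1}$, i.e., $v.\Current_{t+1} \neq -1$. \textbf{Case A: $v \in S_t$.} By the induction hypothesis $v.\Current_t = c_t$ and $v.\Maximum_t = m_t$; since $\Current_t \neq -1$, node $v$ executes the first branch of the state transition (line~\ref{alg:stabilizing:decrement}), ignoring any received messages, so $v.\Current_{t+1} = c_t - 1$ and $v.\Maximum_{t+1} = m_t$. As $v \in S_{t+1}$ forces $c_t - 1 \neq -1$, we get $c_t \neq 0$, which matches the ``otherwise'' branch $(c_{t+1}, m_{t+1}) = (c_t - 1, m_t)$. \textbf{Case B: $v \notin S_t$.} Then $v.\Current_t = -1$ and $v$ is idle at time $t$; since $v$'s state changes, it must have received a message in round $t$. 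Messages in round $t$ are sent exactly by the nodes of $S_t$ (line~\ref{alg:stabilizing:msg}), so $S_t \neq \varnothing$, and by the induction hypothesis each of them sends $\msg{msg}{c_t, m_t}$; hence every message $v$ receives is $\msg{msg}{c_t, m_t}$ with $c_t \geq 0$. If $c_t = 0$, node $v$ executes lines~\ref{alg:stabilizing:reset-i}--\ref{alg:stabilizing:reset-f}, setting $v.\Current_{t+1} = v.\Maximum_{t+1} = 2m_t$, matching $(c_{t+1}, m_{t+1}) = (2m_t, 2m_t)$. If $c_t > 0$, node $v$ executes lines~\ref{alg:stabilizing:join-i}--\ref{alg:stabilizing:join-f}, setting $v.\Current_{t+1} = c_t - 1$ and $v.\Maximum_{t+1} = m_t$, matching the ``otherwise'' branch (and $c_t > 0$ is consistent with Case A, so the two cases agree). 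This exhausts all possibilities and completes the induction.

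I do not expect a serious obstacle here; the only points that require care are that a node passing from idle to non-idle must do so by receiving a message, which by the induction hypothesis carries precisely the shared pair $(c_t, m_t)$, and that the auxiliary bounds $c_t \geq 0$ and $m_t \geq 1$ make the case split exhaustive and guarantee $2m_t \neq -1$. Everything else is a direct reading of the pseudocode. (An aside worth a sentence: if $S_t = \varnothing$ the claim at time $t+1$ is vacuous since no messages are sent, so $S_{t+1} = \varnothing$ as well.)
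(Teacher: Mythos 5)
Your proof is correct and follows essentially the same route as the paper's: induction on $t$, using the fact that all non-idle senders share the pair $(c_t, m_t)$ and a case analysis on the algorithm's transition lines (your split on $v \in S_t$ versus $v \notin S_t$ first, rather than on $c_t = 0$ versus $c_t > 0$ first, is only a reordering of the same cases). The auxiliary observations $c_t \geq 0$, $m_t \geq 1$ are a fine extra touch but do not change the argument.
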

\begin{proof}
    Argue by induction on $t$.
	Only the broadcaster $b$ is initially non-idle, so $c_0 = b.\Current_0 = 0$ and $m_0 = b.\Maximum_0 = 1$ by initialization.
	Now suppose the lemma holds up to and including some time $t \geq 0$ and let $c_t$ and $m_t$ be the unique values of $v.\Current_t$ and $v.\Maximum_t$ for all $v \in S_t$, respectively.
    Consider any node $v \in S_{t+1}$; if none exist, the lemma holds trivially.
    We have two cases:
    
    \begin{enumerate}
        \item $c_t = 0$. \label{lem:stablealg-synchronize:ct=0}
        Suppose to the contrary that $v \in S_t$; i.e., $v$ was also non-idle at time $t$.
        Then $v.\Current_t = c_t = 0$.
        Thus, $v$ must execute Line~\ref{alg:stabilizing:decrement} in round $t$, yielding $v.\Current_{t+1} = -1$ and becoming idle by time $t+1$, a contradiction.
        So $v$ was idle at time $t$ but became non-idle by time $t+1$, meaning it must have received one or more messages from non-idle nodes in round $t$.
        By Line~\ref{alg:stabilizing:msg} and the induction hypothesis, all of those messages are $\msg{msg}{c_t = 0, m_t}$.
        So $v$ must execute Lines~\ref{alg:stabilizing:reset-i}--\ref{alg:stabilizing:reset-f} in round $t$, yielding $v.\Current_{t+1} = v.\Maximum_{t+1} = 2m_t$.
        Our choice of $v \in S_{t+1}$ was arbitrary, so $c_{t+1} = m_{t+1} = 2m_t$.

        \item $c_t > 0$. \label{lem:stablealg-synchronize:ct>0}
        First suppose $v \in S_t$.
        Then $v.\Current_t = c_t > 0$, so $v$ executes Line~\ref{alg:stabilizing:decrement} in round $t$, yielding $v.\Current_{t+1} = c_t - 1$ and $v.\Maximum_{t+1} = m_t$ as claimed.
        Now suppose $v \not\in S_t$.
        To transition from idle to non-idle in round $t$, $v$ must receive one or more messages from non-idle nodes in round $t$.
        By Line~\ref{alg:stabilizing:msg}, all of those messages are $\msg{msg}{c_t > 0, m_t}$.
        So $v$ must execute Lines~\ref{alg:stabilizing:join-i}--\ref{alg:stabilizing:join-f}, yielding $v.\Current_{t+1} = c_t - 1$ and $v.\Maximum_{t+1} = m_t$.
        Our choice of $v \in S_{t+1}$ was arbitrary, so $c_{t+1} = c_t - 1$ and $m_{t+1} = m_t$.
        \qedhere
    \end{enumerate}
\end{proof}

Using the $c_t$ and $m_t$ values defined in Lemma~\ref{lem:stablealg-synchronize}, we next show that a broadcast attempt lasting $k$ rounds either informs all nodes and stabilizes or involves at least $k+1$ non-idle nodes before initiating a new attempt.

\begin{lemma} \label{lem:stablealg-progress}
	If $k := c_t = m_t > 0$, then either (1) $S_{t+k} = V$ and $S_{t+k+1} = \varnothing$, or (2) $|S_{t+k}| \geq k+1$ and $c_{t+k+1} = m_{t+k+1} = 2k$.
\end{lemma}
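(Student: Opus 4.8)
The plan is to read off the entire behavior of a length-$k$ attempt directly from Lemma~\ref{lem:stablealg-synchronize} and then invoke snapshot connectivity once per round, exactly as in the induction in the proof of Theorem~\ref{thm:detection-impossibility}. First I would unroll the recurrence: from $c_t = m_t = k > 0$, an easy induction on $i$ gives $c_{t+i} = k - i$ and $m_{t+i} = k$ for all $i \in \{0, \ldots, k\}$, so every non-idle node broadcasts $\msg{msg}{k-i, k}$ in round $t+i$ by Line~\ref{alg:stabilizing:msg}, a message whose first component is positive for $i < k$ and which becomes $\msg{msg}{0, k}$ precisely in round $t+k$. (Here one may assume $S_t \neq \varnothing$; if $S_t = \varnothing$ then no message is ever sent again, which under Algorithm~\ref{alg:stabilizing} means every node is already informed and idle, so the statement is moot for the use it serves.)

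Next I would prove two facts about the non-idle set $S_{t+i}$ for $i \in \{0, \ldots, k-1\}$. (a) $S_{t+i} \subseteq S_{t+i+1}$: every $v \in S_{t+i}$ has $v.\Current_{t+i} = k - i > 0$, so it executes Line~\ref{alg:stabilizing:decrement} and its \Current\ becomes $k - i - 1 \geq 0 \neq -1$, keeping it non-idle. (b) If moreover $S_{t+i} \neq V$, then $|S_{t+i+1}| \geq |S_{t+i}| + 1$: since $\varnothing \neq S_{t+i} \neq V$ and $G_{t+i}$ is connected, some edge joins a non-idle $u \in S_{t+i}$ to an idle $w \notin S_{t+i}$; node $u$ sends $\msg{msg}{k-i, k}$, and by Lemma~\ref{lem:stablealg-synchronize} every message $w$ receives is this same message, whose first component $k-i$ is positive, so $w$ executes Lines~\ref{alg:stabilizing:join-i}--\ref{alg:stabilizing:join-f} and joins with \Current\ $= k - i - 1 \geq 0$. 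Hence $w \in S_{t+i+1} \setminus S_{t+i}$, and combining with (a) gives the strict growth.

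The case split then falls out. If $S_{t+k} = V$, then at time $t+k$ every node has \Current\ $= c_{t+k} = 0$, so in round $t+k$ every node executes Line~\ref{alg:stabilizing:decrement}, sets \Current\ to $-1$, and $S_{t+k+1} = \varnothing$ --- this is conclusion (1). If instead $S_{t+k} \neq V$, I would first show $S_{t+i} \neq V$ for \emph{every} $i \in \{0, \ldots, k\}$: if $j$ were minimal with $S_{t+j} = V$ then $j < k$, so \Current\ $= k - j > 0$ at time $t+j$, every node merely decrements, $S_{t+j+1} = V$, and this propagates up to $S_{t+k} = V$, a contradiction. Thus fact (b) applies at every step $i = 0, \ldots, k-1$, yielding $|S_{t+k}| \geq |S_t| + k \geq 1 + k$. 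Finally, since $\varnothing \neq S_{t+k} \neq V$, connectivity of $G_{t+k}$ again provides an idle node $w$ adjacent to a non-idle node; all messages $w$ receives are $\msg{msg}{0, k}$, so $w$ executes Lines~\ref{alg:stabilizing:reset-i}--\ref{alg:stabilizing:reset-f}, setting \Current\ $=$ \Maximum\ $= 2m_{t+k} = 2k$; hence $S_{t+k+1} \neq \varnothing$, and Lemma~\ref{lem:stablealg-synchronize} forces $c_{t+k+1} = m_{t+k+1} = 2k$, which is conclusion (2).

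The only real obstacle is the bookkeeping in the second case: one must be certain that $S_{t+i}$ is a \emph{proper} subset of $V$ at every round $i \in \{0, \ldots, k-1\}$, not merely at $i = k$, so that the per-round strict-growth fact (b) is genuinely available each time; this hinges on the small observation that a node which is non-idle while an attempt is in progress (\Current\ $> 0$) cannot become idle. Everything else is a mechanical unrolling of the recurrence of Lemma~\ref{lem:stablealg-synchronize} together with a single appeal to snapshot connectivity per round, mirroring the inductive argument of Theorem~\ref{thm:detection-impossibility}.
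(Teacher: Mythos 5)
Your proof is correct and follows essentially the same route as the paper's: unroll the recurrence from Lemma~\ref{lem:stablealg-synchronize}, observe that non-idle nodes persist while $\Current>0$, use 1-interval connectivity to recruit at least one idle node per round whenever $S_{t+i}\neq V$, and split on whether all nodes become non-idle before the countdown reaches $0$. The only differences are organizational (you case-split on $S_{t+k}=V$ and propagate backward, and state the per-round strict growth as a standalone fact rather than inside the paper's induction hypothesis), plus your explicit flagging of the $S_t=\varnothing$ corner case, none of which changes the substance.
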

\begin{proof}
    Consider any time $t$ at which $k := c_t = m_t > 0$, the start of a new broadcast attempt by the set of nodes $S_t$.
    First suppose that for some $0 \leq i \leq k$, we have $S_{t+i} = V$; i.e., all nodes are non-idle.
    By Lemma~\ref{lem:stablealg-synchronize}, we have $c_{t+i} = c_t - i$.
    Since all nodes $v$ are non-idle at time $t+i$ and thus have $v.\Current_{t+i} = c_{t+i} = c_t - i$, they all execute Line~\ref{alg:stabilizing:decrement} in round $t+i$ by decrementing $v.\Current$.
    If $i < k$, then $v.\Current_{t+i+1} = c_t - i - 1 \neq -1$, so all nodes remain non-idle and the process repeats; otherwise, if $i = k$, then $v.\Current_{t+i+1} = c_t - k - 1 = -1$.
    This renders all nodes idle at time $t+k+1$, so Case~\ref{lem:stablealg-synchronize:ct=0} has occurred.

    Now suppose that $S_{t+i} \neq V$ for all $0 \leq i \leq k$; i.e., there is at least one idle node throughout the attempt.
    We argue by induction on $0 \leq i \leq k$ that $|S_{t+i}| \geq |S_t| + i$; i.e., at least one idle node becomes non-idle in each round.
    The $i = 0$ case holds trivially, so suppose the claim holds up to and including some $i < k$.
    Since $c_{t+i} = c_t - i = k - i$ by Lemma~\ref{lem:stablealg-synchronize}, any node in $S_{t+i}$ must remain non-idle until time $t + i + c_{t+i} = t + k$.
    So $|S_{t+i+1}| \geq |S_{t+i}|$.
    If the induction hypothesis is in fact a strict inequality, we are done:
    \[|S_{t+i}| > |S_t| + i \quad \Rightarrow \quad |S_{t+i+1}| \geq |S_{t+i}| \geq |S_t| + i + 1.\]
    So suppose instead that $|S_{t+i}| = |S_t| + i$.
    There must exist non-idle nodes at time $t+1$ since $c_{t+1} = c_t - 1 > -1$ by Lemma~\ref{lem:stablealg-synchronize}, and there must exist idle nodes at time $t+1$ by supposition.
    Thus, since the dynamic network is 1-interval connected, there must be some idle node $v \in V \setminus S_{t+i}$ that receives a message from a non-idle node in round $t+1$, causing $v$ to become non-idle (Lines~\ref{alg:stabilizing:reset-i}--\ref{alg:stabilizing:reset-f} or~\ref{alg:stabilizing:join-i}--\ref{alg:stabilizing:join-f}).
    By the induction hypothesis, $|S_{t+i+1}| \geq |S_{t+i}| + 1 = |S_t| + i + 1$.

    By this induction argument, we have $|S_{t+k}| \geq |S_t| + k \geq k + 1$.
    By Lemma~\ref{lem:stablealg-synchronize}, we have that $c_{t+k} = c_t - k = 0$, and with another application of the same lemma, we conclude that $c_{t+k+1} = m_{t+k+1} = 2m_{t+k} = 2m_t = 2k$.
    So Case~\ref{lem:stablealg-synchronize:ct>0} has occurred.
\end{proof}

Our algorithm's correctness follows from the previous lemma and its time and space complexities are obtained with straightforward counting arguments.

\begin{theorem} \label{thm:stabilizing}
    \countalg\ (Algorithm~\ref{alg:stabilizing}) correctly solves broadcast with stabilizing termination from an idle start in $\bigo{n}$ rounds and $\bigo{\log n}$ space for anonymous, synchronous, 1-interval connected dynamic networks.
\end{theorem}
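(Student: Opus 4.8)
The plan is to derive everything from Lemmas~\ref{lem:stablealg-synchronize} and~\ref{lem:stablealg-progress} by tracking the sequence of broadcast attempts and then counting. First I would pin down the shape of the execution. In round~$0$ the broadcaster sends \msg{msg}{0,1} and (Line~\ref{alg:stabilizing:decrement}) becomes idle; if $n \ge 2$, then 1-interval connectivity guarantees it has a neighbor, which receives this message and wakes up via Lines~\ref{alg:stabilizing:reset-i}--\ref{alg:stabilizing:reset-f} with $\Current = \Maximum = 2$. So by Lemma~\ref{lem:stablealg-synchronize} the first genuine attempt starts at time $t_1 = 1$ with $c_{t_1} = m_{t_1} = k_1 = 2$. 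Inductively, suppose an attempt of duration $k_j$ begins at time $t_j$ with $c_{t_j} = m_{t_j} = k_j > 0$. Lemma~\ref{lem:stablealg-progress} then splits into a \emph{success} --- $S_{t_j + k_j} = V$ and $S_{t_j + k_j + 1} = \varnothing$ --- or a \emph{continuation}, in which $|S_{t_j + k_j}| \ge k_j + 1$ and a new attempt of duration $k_{j+1} = 2k_j$ begins at time $t_{j+1} = t_j + k_j + 1$. A short induction shows every non-idle node is informed (a node wakes up only on a message, which is sent only by non-idle --- hence informed --- nodes), so a success means every node is informed and the broadcast is complete; conversely, before the first success there are always non-idle nodes (during an attempt by the argument in the proof of Lemma~\ref{lem:stablealg-progress}, and $c = 2k_j > 0$ right after a continuation), so the algorithm stabilizes \emph{exactly} when the first success occurs. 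Since the attempt durations are $k_j = 2^j$, a continuation forces $2^j + 1 \le |S_{t_j+k_j}| \le n$; hence any attempt with $2^j \ge n$ must be a success, the number of attempts is at most $J := \lceil \log_2 n \rceil$, and the algorithm does stabilize with all nodes informed.

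For the complexity bounds I would unroll the recurrence $t_{j+1} = t_j + 2^j + 1$ from $t_1 = 1$ to get $t_j = 2^j + j - 2$; the successful attempt (index at most $J$) completes the broadcast by time $t_J + k_J = 2^{J+1} + J - 2 = \bigo{n}$, and the whole network is idle one round later, giving the $\bigo{n}$ runtime (which is worst-case asymptotically optimal by the trivial lower bound already discussed). For space, Lemma~\ref{lem:stablealg-synchronize} shows that at every time each node's \Current\ and \Maximum\ equal $-1$ or one of the values $c_t, m_t$, all of which lie in $\{0, 1, \dots, 2^J\} \subseteq \{0, \dots, 2n\}$ since no attempt of duration larger than $2^J$ is ever started; storing two such values takes $\bigo{\log n}$ bits.

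I do not expect a substantial obstacle: Lemmas~\ref{lem:stablealg-synchronize} and~\ref{lem:stablealg-progress} carry the argument and what remains is bookkeeping. The points requiring care are: handling the anomalous round~$0$ separately, since $c_0 = 0 \neq m_0 = 1$ and so Lemma~\ref{lem:stablealg-progress} does not apply until time~$1$; verifying carefully that every non-idle node is informed, so that a success certifies a complete broadcast while no earlier configuration is stable; and disposing of the degenerate case $n = 1$ (the broadcaster sends once in round~$0$ and is idle forever after, so broadcast is trivially complete). The only mildly delicate step is ruling out ``premature'' stabilization --- i.e., that the sequence of continuations cannot go on forever --- but this is immediate from the pigeonhole bound $2^j \le n - 1$ on continuing attempts.
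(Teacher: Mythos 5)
Your proposal is correct and follows essentially the same route as the paper's proof: establish $c_1 = m_1 = 2$ via Lemma~\ref{lem:stablealg-synchronize}, iterate Lemma~\ref{lem:stablealg-progress} so that a failed attempt of duration $k$ forces at least $k+1$ non-idle nodes (hence an attempt of duration at least $n$ must succeed), and then bound time by the geometric sum of doubling attempt lengths and space by $\Maximum = \bigo{n}$. Your extra bookkeeping (the anomalous round $0$, the $n=1$ case, the explicit recurrence for $t_j$, and the check that non-idle nodes are informed) only makes explicit details the paper leaves implicit; it does not change the argument.
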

\begin{proof}
    By Lemma~\ref{lem:stablealg-synchronize}, we have $c_1 = m_1 = 2$, allowing us to apply Lemma~\ref{lem:stablealg-progress}.
    But suppose to the contrary that this and all subsequent applications of the lemma result in Case~\ref{lem:stablealg-synchronize:ct>0}---where a new attempt is initiated with double the duration---and not Case~\ref{lem:stablealg-synchronize:ct=0}, where all nodes are informed ($S_{t+k} = V$) and the algorithm stabilizes ($S_{t+k+1} = \varnothing$).
    Then there exists an attempt of duration $k \geq n$, which, by Lemma~\ref{lem:stablealg-progress}, ends with $|S_{t+k}| \geq k + 1 \geq n + 1$ non-idle nodes.
    But there are only $n$ nodes in the network, a contradiction.
    So \countalg\ must inform all nodes and stabilize in finite time.

    It remains to bound runtime and memory.
    Each time a new broadcast attempt is initiated, the $\Maximum$ variable is doubled and the algorithm runs for another $\Maximum$ rounds.
    As we already showed, once $\Maximum$ reaches or exceeds $n$, the subsequent attempt will inform all nodes and stabilize.
    Thus, $\Maximum$ doubles at most $\lceil \log_2 n\rceil$ times, meaning \countalg\ stabilizes in at most $\sum_{i=0}^{\lceil \log_2 n\rceil} 2^i = \bigo{2^{1+\lceil \log_2 n\rceil} - 1} = \bigo{n}$ rounds.
    This analysis also shows that the largest attainable $\Maximum$ value before its final doubling is $n - 1$, so $\Maximum \leq 2(n-1) = \bigo{n}$.
    Since $-1 \leq \Current \leq \Maximum$, we also have $\Current = \bigo{n}$, implying that \countalg\ has $\bigo{\log n}$ space complexity.
\end{proof}

\section{Conclusion} \label{sec:conclude}

This paper investigated what memory is necessary for anonymous, synchronous, 1-interval connected dynamic networks to deterministically solve broadcast with some termination conditions.
We considered both \textit{termination detection} where the broadcaster must eventually declare that every node has been informed and \textit{stabilizing termination} where nodes must eventually stop sending messages.
Combining our results with the established literature, we now know the following about this problem:

\begin{itemize}
    \item \textit{Termination Detection}.
    Regardless of memory, broadcast with termination detection is impossible for idle-start algorithms (Theorem~\ref{thm:detection-impossibility}) and for non-idle-start algorithms when the number of broadcasters is unknown (Theorem~\ref{thm:detect-multileader}).
    Any (non-idle-start) algorithm solving broadcast with termination detection must use $\Omega(\log n)$ memory per node (Theorem~\ref{thm:detection-bound}).
    The best known space complexity for this problem follows from Di Luna and Viglietta's history trees algorithm which uses $\bigo{n^3\log n}$ memory in the worst case~\cite{DiLuna2023-briefannouncement}.

    \item \textit{Stabilizing Termination}.
    Any idle-start algorithm solving broadcast with stabilizing termination must use $\omega(1)$ memory per node (Theorem~\ref{thm:stable-bound}).
    Building evidence against O'Dell and Wattenhofer's conjecture~\cite{ODell2005-informationdissemination}, this problem is solvable with logarithmic memory under standard synchrony: \countalg\ is a $\bigo{\log n}$ memory, linear time algorithm achieving stabilizing termination without identifiers or knowledge of $n$ (Theorem~\ref{thm:stabilizing}).
\end{itemize}

For stabilizing termination, our $\omega(1)$ memory bound holds only for idle-start algorithms and our $\bigo{\log n}$ memory \countalg\ algorithm happens to be idle-start.
In the non-idle-start regime where non-broadcaster nodes can send messages from their initial states, can we obtain a sublogarithmic space algorithm?
Can any lower bound be shown?
The contradiction at the heart of our lower bound technique for idle-start algorithms identified configurations $C_i$ that are reachable with or without an extra uninformed node.
In the non-idle-start case, however, this extra uninformed node may send new and unaccounted for messages and we can no longer guarantee $C_i$ will still be reached, requiring a different approach.

The $\Omega(\log n)$ and $\bigo{n^3\log n}$ memory bounds for termination detection leave open a significant gap for further improvement.
Our logarithmic lower bound shows that termination detection requires enough memory to count to $n$, and indeed there is a straightforward solution for termination detection if (an upper bound on) $n$ can be obtained: simply wait for $n$ rounds after broadcasting information for the first time and then declare all nodes have been informed.
Can broadcast with termination detection be achieved without solving exact counting?
If not, what approaches could yield algorithms for exact counting that are more space-efficient than history trees?
Viglietta recently proposed the existence of logspace counting algorithms as an open problem unlikely to be solved by history trees~\cite{Viglietta2024-historytrees}; we are unsure such algorithms exist at all, as we suspect our $\Omega(\log n)$ bound can be improved.

Finally, we note that all impossibility results and lower bounds in this paper apply also to any problem broadcast reduces to, such as exact counting.
Thus, these results and any future improvements shed important light on the requirements of nontrivial terminating computation in anonymous dynamic networks.

\bibliographystyle{plainurl}
\bibliography{ref}

\end{document}